\renewcommand\footnotetextcopyrightpermission[1]{} 
  \providecommand\BibTeX{{%
    \normalfont B\kern-0.5em{\scshape i\kern-0.25em b}\kern-0.8em\TeX}}}
 \newcommand{\xnearrow}[1]{%
  {\left\nearrow\vbox to #1{}\right.\kern-\nulldelimiterspace}
}
\algrenewcommand\alglinenumber[1]{\scriptsize #1:}
\newenvironment{semantics}{
  \begin{displaymath}}{
  \end{displaymath}
}
\newcommand{\ntyperule}[3]{ 
  \begin{array}{c} 
    \textsc{\scriptsize ({#1})} \\ 
    #2 \\[1mm]
    \hline 
    \raisebox{0pt}[12pt][0pt]{\ensuremath{#3}}
  \end{array}}
\newcommand\secref[1]{Sect.~\ref{#1}}
\newcommand\secsref[1]{Sects.~\ref{#1}}
\newcommand\figref[1]{Fig.~\ref{#1}}
\newcommand\defref[1]{Def.~\ref{#1}}
\newcommand{\todo}[1]{{\color{red}\bfseries [#1]}}
\newcommand{\ana}[1]{\todo{ALM: #1}}
\newcommand{\cd}{\ensuremath{\mathit{cd}}\xspace}
\newcommand{\fd}{\ensuremath{\mathit{fd}}\xspace}
\newcommand{\md}{\ensuremath{\mathit{md}}\xspace}
\newcommand{\class}{{\sf class}\xspace}
\newcommand{\extends}{{\sf extends}\xspace}
\newcommand{\return}{{\sf return}\xspace}
\newcommand{\s}{{\sf s}\xspace}
\newcommand{\f}{{\sf f}\xspace}
\newcommand{\g}{{\sf g}\xspace}
\newcommand{\m}{{\sf m}\xspace}
\newcommand{\p}{{\sf p}\xspace}
\newcommand{\vv}{{\sf {v}}\xspace}
\newcommand{\w}{{\sf {w}}\xspace}
\newcommand{\x}{{\sf {x}}\xspace}
\newcommand{\y}{{\sf y}\xspace}
\newcommand{\z}{{\sf z}\xspace}
\newcommand{\D}{{\sf D}\xspace}
\newcommand{\new}{{\sf new}\xspace}
\newcommand{\this}{{\sf this}\xspace}
\newcommand{\ret}{{\sf ret}\xspace}
\newcommand{\low}{\ensuremath{\sf neg}\xspace}
\newcommand{\high}{\ensuremath{\sf pos}\xspace}
\newcommand{\poly}{\ensuremath{\sf poly}\xspace}
\newcommand{\readonly}{\ensuremath{\sf readonly}\xspace}
\newcommand{\mutable}{\ensuremath{\sf mutable}\xspace}
\newcommand{\code}[1]{{\sf #1}\xspace}
\newcommand{\rhdri}{\ensuremath{\rhd_{\sc RI}}}
\newcommand{\ol}[1]{\overline{#1}}
\newcommand{\rulename}[1]{\textsc{\scriptsize {(#1)}}\xspace}
\newcommand{\arrow}[1]{\ensuremath{\stackrel{\texttt{#1}}{\rightarrow}}}
\newcommand{\inversearrow}[1]{\ensuremath{\stackrel{\texttt{#1}}{\dashrightarrow}}}
\newcommand{\callarrow}[1]{\ensuremath{\stackrel{(_{#1}}{\rightarrow}}}
\newcommand{\inversecallarrow}[1]{\ensuremath{\stackrel{(_{#1}}{\dashrightarrow}}}
\newcommand{\retarrow}[1]{\ensuremath{\stackrel{)_{#1}}{\rightarrow}}}
\newcommand{\inverseretarrow}[1]{\ensuremath{\stackrel{)_{#1}}{\dashrightarrow}}}
\newcommand{\patharrow}[1]{\ensuremath{\stackrel{#1}{\rightsquigarrow}}}
\newcommand{\approxarrow}{\ensuremath{\stackrel{\texttt{a}}\dashrightarrow}}
\begin{document}

\title[FlowCFL]{FlowCFL: A Framework for Type-based Reachability Analysis in the Presence of Mutable Data}

\author{Ana Milanova}
\affiliation{%
  \institution{Rensselaer Polytechnic Institute, milanova@cs.rpi.edu}
}

  







\renewcommand{\shortauthors}{Milanova, Ana}

\begin{abstract}
Reachability analysis is a fundamental program analysis with a wide variety of applications. We present FlowCFL, a framework for 
type-based reachability analysis in the presence of mutable data. Interestingly, the underlying semantics of FlowCFL is CFL-reachability. 

We make three contributions. First, we define a dynamic semantics that captures the notion of flow commonly used in reachability analysis. 
Second, we establish correctness of CFL-reachability over graphs with \emph{inverse} edges (inverse edges are necessary for the handling of mutable heap data). 
Our approach combines CFL-reachability with \emph{reference immutability} to avoid the addition of certain infeasible
inverse edges and we demonstrate empirically that avoiding those edges results in precision improvement. 
Our formal account of correctness extends to this case as well. Third, we present a type-based reachability analysis and establish 
equivalence between a certain CFL-reachability analysis and the type-based analysis, thus proving correctness of the type-based analysis.   
\end{abstract}

\begin{CCSXML}
<ccs2012>
 <concept>
  <concept_id>10010520.10010553.10010562</concept_id>
  <concept_desc>Computer systems organization~Embedded systems</concept_desc>
  <concept_significance>500</concept_significance>
 </concept>
 <concept>
  <concept_id>10010520.10010575.10010755</concept_id>
  <concept_desc>Computer systems organization~Redundancy</concept_desc>
  <concept_significance>300</concept_significance>
 </concept>
 <concept>
  <concept_id>10010520.10010553.10010554</concept_id>
  <concept_desc>Computer systems organization~Robotics</concept_desc>
  <concept_significance>100</concept_significance>
 </concept>
 <concept>
  <concept_id>10003033.10003083.10003095</concept_id>
  <concept_desc>Networks~Network reliability</concept_desc>
  <concept_significance>100</concept_significance>
 </concept>
</ccs2012>
\end{CCSXML}

\ccsdesc[500]{Computer systems organization~Embedded systems}
\ccsdesc[300]{Computer systems organization~Redundancy}
\ccsdesc{Computer systems organization~Robotics}
\ccsdesc[100]{Networks~Network reliability}

\keywords{CFL-reachability, reference immutability, type-based analysis}

\maketitle


\section{Introduction}

Reachability analysis detects flow from \emph{sources} to 
\emph{sinks}. It is a fundamental program analysis technique with a wide variety 
of applications.
One prominent application is taint analysis for Android, which detects flow from 
sensitive sources, such as phone and location data, to untrusted
sinks, such as the Internet~\citep{Arzt:2014PLDI,Ernst:2014CCS,Huang:2015ISSTA}.  

In this paper, we study FlowCFL, a framework for type-based reachability analysis. FlowCFL supports two basic type
qualifiers, \high (positive) and \low (negative). It permits flow from \low variables to \high ones, but forbids
flow from \high variables to \low ones. 
The principal problem is to decide whether there is flow from a \high variable to a \low one. 
Our primary contribution is not the FlowCFL system itself; variants of FlowCFL, both graph-reachability-based and type-based
have been used in program analysis for a long time. There are publicly available implementations, including our own. 
Our contribution is the formal treatment of FlowCFL. We formalize the \emph{notion of flow} in terms of a
dynamic semantics and we use the semantics to construct a correctness argument for the static analyses. 
Another contribution is the interpretation of FlowCFL, a \emph{type-based} reachability analysis, in terms of 
the classical theory of Context-Free-Langauge(CFL)-reachability~\citep{Reps:1995POPL, Reps:1998CFL, Reps:2000TOPLAS}. 

Standard CFL-reachability analysis has two phases.
First, it constructs a graph that represents flow of values from one variable to another; edges are annotated with \emph{call} and \emph{return}
annotations to model call-transmitted dependences and with field \emph{write} and field \emph{read} annotations 
to model heap-transmitted dependences. 
Next, the analysis searches for paths with properly matched call/return and write/read annotations. 
CFL-reachability analysis is a highly precise flow analysis. It has a long history~\cite{Reps:1995POPL, Reps:2000TOPLAS} 
and it is still actively studied and actively used in program analysis;~\cite{Xu:2009ECOOP, Zhang:2017POPL, Chatterjee:2018POPL, Spath:2019POPL, Lu:2019OOPSLA} are recent works among many other works. 
An important concept in CFL-reachability analysis is the concept of the \emph{inverse edge}~\cite{Sridharan:2005OOPSLA,Sridharan:2006PLDI}, which is necessary for the handling of mutable heap data. At assignments, e.g., at {\sf x = y}, the analysis adds the expected \emph{forward
edge} from \y to \x that represents flow from \y to \x, however, it also adds an \emph{inverse edge} from \x to \y thus constructing a \emph{bidirectional CFL-reachability graph} $G_{\sc BI}$. 
We discuss inverse edges, the bidirectional CFL-reachability graph, and mutable data in detail in~\secref{sec:CFL}.
The concept of the inverse edge, which becomes more involved once 
we consider call/return and write/read annotated edges and paths, has not been formalized. 
The question \emph{``Does CFL-reachability over $G_{\sc BI}$ capture all program flows?''} 
has not been answered formally. We consider an answer to this question.

Additionally, we consider a graph that avoids adding certain inverse edges based on knowledge of reference immutability. We denote this graph by $G_{\sc RI}$.
Given an \emph{immutable reference} \x, there is no need to add inverse 
paths that originate at \x. There is substantial precision improvement for reachability over $G_{\sc RI}$ compared to reachability over $G_{\sc BI}$ 
as demonstrated in earlier work~\cite{Milanova:2013FTfJP,Zhang:2017POPL} and confirmed by experiments we run for this paper. 
Our formal treatment extends to this case. We answer the following question as well: \emph{``Does CFL-reachability over $G_{\sc RI}$ capture all program flows?''}.

Returning to FlowCFL, the type-based analysis uses the \high, \low, and \poly type qualifiers and a set of typing rules to model reachability. Although not as widespread 
as CFL-reachability, type-based reachability analysis has been used in a number of existing works, e.g.,~\cite{Shankar:2001USENIXSecurity,Sampson:2011PLDI,Huang:2015ISSTA}. 
Type-based analysis conveniently models reachability problems in different domains, including taint analysis, approximate computing, and secure computation. We establish equivalence between a certain type-based reachability analysis and a certain CFL-reachability analysis over $G_{\sc RI}$, thus establishing correctness
of the type-based analysis.


This paper makes the following contributions:

\begin{itemize}

\item We present a dynamic semantics that formalizes the notion of \emph{flow} commonly used in CFL-reachability and type-based reachability.

\item We prove that CFL-reachability over $G_{\sc BI}$ captures all run-time flows. Our treatment extends to reachability over 
$G_{\sc RI}$ which avoids adding certain edges based on knowledge of reference immutability. We present experiments that show
substantial precision improvement in taint analysis for Android over $G_{\sc RI}$ compared to analysis over $G_{\sc BI}$. Our experiments are 
in line with earlier work that has shown the importance of reducing the number of inverse edges~\cite{Milanova:2013FTfJP,Zhang:2017POPL}.

\item We establish equivalence between a type-based reachability analysis and a CFL-reachability analysis, thus proving
correctness of the type-based analysis.

\end{itemize}

The rest of the paper is organized as follows. \secref{sec:overviewAndApplications}
presents an overview of FlowCFL 
and briefly discusses applications of FlowCFL. 
\secref{sec:dynamicSemantics} presents the dynamic semantics of flows.
\secref{sec:CFL} presents CFL-reachability, $G_{\sc BI}$, reference immutability, and 
the construction of $G_{\sc RI}$.
\secref{sec:soundness} details the correctness argument.
\secref{sec:typeBasedSemantics} presents the type-based analysis,
and~\secref{sec:equivalence} establishes equivalence between the type-based
and CFL-based analyses. \secref{sec:relatedWork} discusses related
work and \secref{sec:conclusion} concludes.

\section{Overview and Applications}
\label{sec:overviewAndApplications}

\subsection{Overview of FlowCFL}
\label{sec:overview}

In a typical setting, reachability analysis reasons about flow from \emph{sources} to \emph{sinks}.
FlowCFL assigns type \emph{qualifier} to variables and fields. 
There are two basic qualifiers: \high, which denotes sources, and \low, which denotes sinks. We have
\[
\low <: \high 
\]
where $q_1 <: q_2$ denotes $q_1$ is a subtype of $q_2$. ($q$ is also a subtype of itself $q <: q$.) 
Therefore, it is allowed to assign a \low variable to a \high one, i.e., a \low variable can flow to a \high one:

\begin{center}
\begin{minipage}{5cm}
\begin{lstlisting}[numbers=none]
  %\clow% String n = ...; %\vspace{0.1pt}%
  %\chigh% String p = n; 
\end{lstlisting}
\end{minipage}
\end{center}

However, it is not allowed to assign a \high variable to a \low one, i.e., a \high variable cannot flow to a \low one:

\begin{center}
\begin{minipage}{5cm}
\begin{lstlisting}[numbers=none]
  %\chigh% String p = ...; %\vspace{0.1pt}%
  %\clow% String n = p;    // error!
\end{lstlisting}
\end{minipage}
\end{center}

Note that this is the natural subtyping. Such subtyping is unsafe
in the presence of mutable references~\cite{Myers:1997POPL,Sampson:2011PLDI}
and systems use \emph{equality}, which is akin to the inverse edges in CFL-reachability.
FlowCFL leverages reference immutability (e.g., ReIm~\cite{Huang:2012OOPSLA}, Javari~\cite{Tschantz:2005OOPSLA}) to allow for 
safe but limited subtyping. 

Once the sources and/or sinks are given, FlowCFL \emph{infers} qualifiers for
the rest of the variables. Roughly, if a source flows to a variable \x, then \x is 
\high; if a variable \y flows to a sink, then \y is \low. If inference fails, i.e., reports \emph{error(s)}, 
then there may be a leak from a source to a sink. Otherwise, it is guaranteed that there is no flow 
from a source to a sink.

FlowCFL is context-sensitive (i.e., polymorphic) as illustrated by the following example. 
We elaborate on context sensitivity in~\secsref{sec:CFL}-\ref{sec:typeBasedSemantics}.

\begin{center}
\begin{minipage}{5cm}
\begin{lstlisting}
%\cpoly% String id(%\cpoly% String p) {
    return p;
}
%\chigh% String source = ...;
%\chigh% String x = id(source);
   
%\clow% String y = ...;
%\clow% String sink = id(y);   
\end{lstlisting}
\end{minipage}
\end{center}

In the above example, the identity function {\sf id} is context-sensitive. {\sf id} is
interpreted as \high{} in line 5 and it is interpreted as \low{} in line 8. FlowCFL precisely propagates
{\sf source} to \x but not to {\sf sink}; it propagates {\sf sink} back to \y but not to {\sf source}.
A context-insensitive system rejects the program as it merges flow through {\sf id}
and imprecisely decides that there is flow from {\sf source} to {\sf sink}.



From a practical point of view, FlowCFL supports two different settings of
the problem. The \emph{negative setting} assumes a set of initial sink annotations and propagates those sinks backwards, i.e., against the direction of the flow, 
towards program variables. 
Unaffected variables remain positive. The more precise the analysis, the fewer variables become \low
and a larger number of variables remain \high.
The \emph{positive setting} assumes initial source annotations and propagates those sources forward. 
The well-known ``taint-analysis'' problem, which entails annotations on both 
sources and sinks, can be cast in either of the settings. FlowCFL can run in either the negative or positive setting; if it detects 
a conflict, i.e., a variable is annotated \high{} but is inferred \low{} (or in the positive setting, 
it is annotated \low{} but is inferred \high{}), it reports an error. 


\begin{figure}[t]
  \begin{tabular}{lll}
  \begin{minipage}{5.25cm}
    \begin{lstlisting}
public class Data { 
  String secret;
  void set(String p) { 
     this.secret = p; 
  }
  String get() { 
     return this.secret; 
  } 
}
  \end{lstlisting}
  \end{minipage}
  
  &
  
  &
  
  \begin{minipage}{8cm}
    \begin{lstlisting}
public class FieldSensitivity2 extends Activity {
  protected void onCreate(Bundle b) {
    Data dt = new Data(); 
    TelephonyManager tm = (TelephonyManager) getSystemService("phone");
    %\chigh% String sim = tm.getSimSerialNumber(); 
    dt.set(sim);    
    SmsManager sms = SmsManager.getDefault();
    %\clow% String sg = dt.get();
    sms.sendTextMessage("+123",null,sg,null,null);
  }
}
    \end{lstlisting}
  \end{minipage}
  \end{tabular}
  \caption{FieldSensitivity2 is rephrased from DroidBench~\cite{Fritz:2013TR,Arzt:2014PLDI}.
  {\sf getSimSerialNumber} in line 5 in {\sf FieldSensitivity2} retrieves sensitive telephony information 
  and its return value is a source. The parameter of {\sf sendTextMessage} in line 9 is a sink.  
  There is flow from source {\sf sim} to sink {\sf sg} through the {\sf Data} container and FlowCFL reports an error. 
  We note that in actual implementations of taint analysis, there are no annotations in app code only in the Android SDK;
  we have annotated {\sf sim} as \high and {\sf sg} as \low in the above code purely for illustration purposes.
} 
\label{fig:fieldsensitivity}
\end{figure}

\subsection{Applications}


There is a wide variety of applications of FlowCFL. One prominent example, taint analysis for Android reasons about flow of sensitive data 
(e.g., phone data, location data) to untrusted sinks (e.g., the Internet, Sms texts). 
\cite{Huang:2015ISSTA} and~\cite{Ernst:2014CCS}, among others, describe type-based taint analysis that are instances of FlowCFL. 
\figref{fig:fieldsensitivity} illustrates taint analysis for Android with FlowCFL. 
FlowCFL decides that there is flow from positive {\sf sim}, the device SIM serial number (SSN), to negative {\sf sg}, the body of the text message. 
Assuming inference in the negative setting, the analysis determines that {\sf sim} is \low{}, 
which clashes with the designation of {\sf sim} as source and {\sf sim}'s \high type. The analysis issues an error that captures that source {\sf sim} flows to sink {\sf sg}. 

There are many instances of FlowCFL and different instances generally demand different settings. Type systems that underpin approximate computing~\cite{Sampson:2011PLDI, Carbin:2013OOPSLA, Bornholt:2014ASPLOS, Holt:2016SOCC} are instances of FlowCFL. E.g., EnerJ~\cite{Sampson:2011PLDI} can be expressed as an instance of FlowCFL in the negative setting. Secure computation, where reachability analysis can partition a program into a secure (and expensive) partition and a plaintext (and inexpensive) partition, is another area of application; the analysis can be expressed as an instance of FlowCFL in the positive setting~\cite{Dong:2016PPPJ}.
We have implemented EnerJ~\cite{Sampson:2011PLDI}, Rely~\cite{Carbin:2013OOPSLA}, DroidInfer~\cite{Huang:2015ISSTA}, and JCrypt~\cite{Dong:2016PPPJ} as instances of FlowCFL (Appendix~\ref{app:applications} describes the instantiations). 
The wide variety of applications motivates our study of FlowCFL and its connection to CFL-reachability.

\subsection{Overview of CFL-reachability}
\label{sec:overviewCFL}

FlowCFL is a type-based analysis but its underlying semantics is the classical CFL-reachability analysis~\cite{Reps:1995POPL,Reps:1998CFL,Reps:2000TOPLAS}.
CFL-reachability analysis proceeds in two phases. First, it constructs a flow graph representation of the program. Second, it reasons about reachability over the graph. 
Throughout the paper, we will work with the example in~\figref{fig:runningExample}, which is a rephrase of the {\sf FieldSensitivity} example in~\figref{fig:fieldsensitivity}.
CFL-reachability constructs the graph shown below. Annotations $(_i$ and $)_i$ model call-transmitted dependences. For example, edge ${\sf a} \callarrow{6} \p$ represents that at call site 6 in {\sf main} {\sf a} flows to parameter \p of {\sf set}, and edge ${\sf ret} \retarrow{7} {\sf b}$ represents that at call site 7 \ret of {\sf get} flows to {\sf b}. Annotations $w_\f$ and $r_\f$ model heap-transmitted dependences. $\p \arrow{$w_\f$} \this_{\sf set}$ models flow of \p into field \f of $\this_{\sf set}$, and $\this_{\sf get} \arrow{$r_\f$} \ret$ models read of field \f from $\this_{\sf get}$ into \ret. 
\begin{figure}[t]
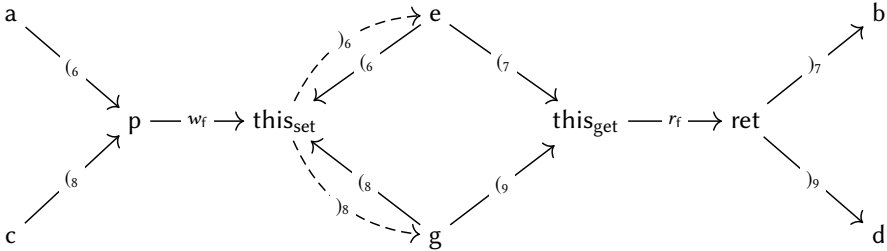

  \begin{tabular}{lll}
  \begin{minipage}{5.25cm}
    \begin{lstlisting}
public class A { 
  B f;
  void set(A this, B p) { 
     this.f = p; 
  }
  B get(A this) { 
     ret = this.f;
     return ret;  
  } 
}
  \end{lstlisting}
  \end{minipage}

  & 
  
  &
  
  \begin{minipage}{8cm}
    \begin{lstlisting}
public class C {
  public static void main(...) {
    A e = new A(); 
    A g = new A();
    ...
    e.set(a);
    b = e.get();
    g.set(c);
    d = g.get();
  }
}
    \end{lstlisting}
  \end{minipage}
  \end{tabular}
  \caption{Running example. {\sf a} in line 6 in {\sf main} flows to {\sf b} in line 7; {\sf c} in line 8 flows to {\sf d} in line 9. We make parameter \this explicit.
} 
\label{fig:runningExample}
\end{figure}

\[
\begin{tikzcd}[cramped, sep=large]
{\sf a} \arrow[dr, "{(_6}" description] & & & {\sf e} \arrow[dr, "{(_7}" description] \arrow[dl, "{(_6}" description] & & & {\sf b} \\
& \p \arrow[r, "{w_\f}" description] & \this_{\sf set} \arrow[ur, bend left, dashed, "{)_6}" description] \arrow[dr, bend right, dashed, "{)_8}" description] & & \this_{\sf get} \arrow[r, "{r_\f}" description] & \ret \arrow[ur, "{)_7}" description] \arrow[dr, "{)_9}" description]&  \\
{\sf c} \arrow[ur, "{(_8}" description] & & & {\sf g} \arrow[ur, "{(_9}" description] \arrow[ul, "{(_8}" description]  & & & {\sf d} \\
\end{tikzcd} 
\]
The principal problem is to decide whether there is flow from one node to another and CFL-reachability analysis makes use of the call/return and write/read annotations to make the decisions. In our example, there is flow from {\sf e} to {\sf b} because the call and return annotations, $(_7$ and $)_7$ respectively, match. However, there is no flow from {\sf e} to {\sf d} because call annotation $(_7$ and return annotation $)_9$ do not match; they denote two distinct calls. Analogously, field write and field read annotations have to match, there is flow from {\p} to {\ret} because $w_\f$ and $r_\f$ denote a write and a read of the same field \f. The analysis decides that there is flow from {\sf a} to {\sf b} and from {\sf c} to {\sf d}, however there is no flow from $\sf a$ to $\sf d$ or from $\sf c$ to $\sf b$.

There are two notable points. The first point concerns inverse edges. In the example, there is a forward edge from {\sf e} to ${\sf this}_{\sf set}$ (solid: $\rightarrow$) and there is an inverse edge from  ${\sf this}_{\sf set}$ to {\sf e} (dashed: $\inversearrow{}$) that reverses the direction of flow and the annotation; call annotation $(_6$ becomes return annotation $)_6$. The forward edge is a natural addition to the graph representing flow of receiver {\sf e} to ${\sf this}_{\sf set}$. The inverse edge, however, is unnatural but it is necessary to discover the path from {\sf a} to {\sf e} and then to {\sf b} as the \emph{mutation}, i.e., update of $\this_{\sf set}$ reverses flow. Standard reachability analysis adds an inverse edge for every forward edge (e.g., \cite{Zhang:2017POPL}, \cite{Lu:2019OOPSLA}); in our example, every solid edge would have had a corresponding dashed edge in the standard $G_{\sc BI}$. Our analysis takes into account reference immutability information and adds only the minimal number of inverse edges necessary to decide flow correctly; the graph shown is the $G_{\sc RI}$. One key problem we address is to show that $G_{\sc BI}$, and more interestingly $G_{\sc RI}$, indeed capture all run-time flows. We define a dynamic semantics that formalizes run-time flows, in our example, the meaning of ``\p in context of invocation of {\sf set} in line 6 flows to \ret in context of invocation of {\sf get} in line 7'' (\secref{sec:dynamicSemantics}). We proceed to define the construction of $G_{\sc BI}$ and $G_{\sc RI}$ (\secref{sec:CFL}) and argue soundness, i.e., that $G_{\sc RI}$ does represent all run-time flows (\secref{sec:soundness}). The second point concerns paths with interleaved call/return and write/read annotations. For example, the path from {\sf a} to {\sf b} involves matching call/return annotations, $(_6$ and $)_6$, as well as $(_7$ and $)_7$, and separately, matching write/read annotations $w_\f$ and $r_\f$. Exact reasoning over such paths is undecidable~\cite{Reps:2000TOPLAS}. We present a certain approximate reachability analysis over $G_{\sc RI}$ and show that type-based FlowCFL is equivalent to that analysis (\secref{sec:equivalence}). We interpret the seemingly different type-based FlowCFL in terms of CFL-reachability (\secref{sec:typeBasedSemantics}).

\section{Dynamic Semantics}
\label{sec:dynamicSemantics}

In this section we formalize the notion of flow in terms of a dynamic semantics. 
We restrict our core language to a ``named form'' in the style of 
Vaziri et al.~\cite{Vaziri:2010ECOOP,Dolby:2012TOPLAS}. The language 
models Java with the syntax in~\figref{fig:syntax}, where the results of instantiations, 
field accesses, and method calls are immediately stored in a variable. Without loss of 
generality, we assume that methods have parameter \this, and exactly one other formal 
parameter. 

\begin{figure}[t]
\small
\[
\begin{array}{l@{~~~~~}l}
  \begin{array}{l@{~}l@{~~~}l}
 \cd & ::= \class\; {\sf C} \; \extends\; \D \;\{\ol\fd\; \ol\md\}& \mathit{class} \\
 \fd &::= t \;\f & \mathit{field} \\
 \md &::= t\;{\m}(t \;{\this}, \; t\;\x) \; \{\; \ol{t\;\y}\;\s;\;\return\;\y \; \} 
                  & \mathit{method} \\ 
  \s &::= \s;\s \mid  \x=\new\; t \mid \x=\y  \mid \x=\y.\f  
  \mid \y.\f=\x \mid \x=\y.\m(\z) & \mathit{statement} \\ 
   t &::= q \; {\sf C} & \mathit{qualified \; type}\\
   q &::= \high \mid \poly \mid \low & \mathit{FlowCFL \; qualifier}\\

  \end{array}
\end{array}
\]
\caption{Syntax. {\sf C} and {\D} are class names, \code{f} is a field name,
\code{m} is a method name, and \code{x}, \code{y}, and \code{z} are names of local variables,
formal parameters, or parameter \code{this}. As in the code examples,
\code{this} is explicit. 
The syntax separates object creation from initialization (i.e., {\sf x = new t()} 
becomes {\sf x = new t; x.init()}).
} 
\label{fig:syntax}
\end{figure}

\subsection{Stack Contexts and Chains}
\label{sec:framesAndChains}


\emph{Stack contexts} describe stack configurations at a point of program execution; as expected, they help formalize 
run-time local variables, such as the following: ``\p in context of invocation of {\sf set} in line 6``. $\langle {\sf main}, f_1, f_2 ... f_n \rangle$ 
is the stack made up of ${\sf main}$, followed by frame identifier $f_1$ corresponding to some callee $m_1$ in {\sf main}, followed by $f_2$ 
for some callee $m_2$ in $m_1$, etc, with frame $f_n$ at the top of the stack. Each $f_i$ has a unique identifier---if, say, $m_2$ is 
called again from the same call site in $m_1$, that would entail a new frame and frame identifier at runtime. We use $A, B, C,...$ to denote 
stack contests. Local variables, naturally, are characterized by their stack context: we write $\x^A$ to denote local variable \x in context $A$. 

In our running example in~\figref{fig:runningExample}, we have stack contexts $\langle {\sf main}, f_1 \rangle$ and $\langle {\sf main}, f_2 \rangle$ 
where $f_1$ corresponds to the frame of {\sf set} invoked in line 6, and $f_2$ corresponds to the frame of {\sf set} invoked in line 8. 
Variables in {\sf set} are characterized by their stack context: we write $\p^{\langle {\sf main}, f_1 \rangle}$, $\this^{\langle {\sf main}, f_1 \rangle}$, etc.

The notion of the \emph{chain} is essential in our treatment. Informally, there is a chain from $\x^A$ to $\y^B$, denoted by $(\x^A, \y^B)$, if $\x^A$ flows to $\y^B$.
In~\figref{fig:runningExample} there is a chain from $\p^{\langle {\sf main}, f_1 \rangle}$ to $\ret^{\langle {\sf main}, f_3 \rangle}$
where $f_3$ is the frame that corresponds to the invocation of {\sf get} in line 7. Similarly, there are chains $({\sf a}^{\langle {\sf main} \rangle}, \ret^{\langle {\sf main}, f_3 \rangle})$, 
 $({\sf a}^{\langle {\sf main} \rangle}, {\sf b}^{\langle {\sf main}\rangle})$ among others. Chains are represented in $G_{\sc RI}$ by appropriately annotated paths. E.g., 
 chain $({\sf a}^{\langle {\sf main} \rangle}, \ret^{\langle {\sf main}, f_3 \rangle})$ is represented by path 
 \[{\sf a} \callarrow{6} \p \arrow{$w_\f$} \this_{\sf set} \retarrow{6} {\sf e} \callarrow{7} \this_{\sf get} \arrow{$r_\f$} \ret \] 
 The unmatched call annotation $(_7$ represents that {\sf a} flows into 
 frame $f_3$ where $f_3$ maps to call site 7 in the abstract. The string with unmatched call $(_7$ captures in the abstract the ``difference'' between contexts 
 ${\langle {\sf main} \rangle}$ and $\langle {\sf main}, f_3 \rangle$.

The semantics is a standard small-step dynamic semantics extended with the treatment of chains. We write $\llbracket s \rrbracket(A, \mathbb{C}, \mathbb{S}, \mathbb{H}) = \mathbb{C'}, \mathbb{S'}, \mathbb{H'}$ to model execution of statement $s$ in context $A$ and its effect on chains $\mathbb{C}$, stack $\mathbb{S}$, and heap $\mathbb{H}$. Here $\mathbb{C}$ 
is a map from variables $\y^B$ to \emph{sets of sources} of chains that end at $\y^B$. We say $(\x^A,\y^B) \in \mathbb{C}$ iff $\y^B \in \mathit{Dom}(\mathbb{C})$
and $\x^A \in \mathbb{C}(\y^B)$. $\mathbb{S}$ and $\mathbb{H}$ are the standard
maps from variables to objects $o$ (map $\mathbb{S}$), and from object/field tuples $o.\f$ to objects $o'$ (map $\mathbb{H}$). Below we discuss 
the semantics of individual statements. 

An assignment {\sf x = y} in context $A$ records chains that end at $\x^A$. There is a new chain $(v,\x^A) \in \mathbb{C'}$ for every chain $(v,\y^A) \in \mathbb{C}$, 
and there is a chain $(\y^A, \x^A) \in \mathbb{C'}$ that accounts for the flow from \y to \x. The transition on the stack is standard: $\x^A$ points to the object that $\y^A$ points to
and the heap remains the same:
\[
\mbox{ASSIGN} \quad \llbracket {\sf x = y} \rrbracket (A, \mathbb{C}, \mathbb{S}, \mathbb{H}) = \mathbb{C}[\x^A \leftarrow \mathbb{C}(\y^A)\cup \{ \y^A\}], \mathbb{S}[\x^A \leftarrow \mathbb{S}(\y^A)], \mathbb{H}
\]

Field write {\sf x.f = y} records chains $(v, o.\f)$ and field read ${\sf y' = x'.f}$ references those chains to record $(v, (\y')^A)$:
\[
\begin{array}{llll}
\mbox{WRITE} & \llbracket {\sf x.f = y} \rrbracket (A, \mathbb{C}, \mathbb{S}, \mathbb{H}) & = & \mathbb{C}[o.\f \leftarrow \mathbb{C}(\y^A)\cup \{ \y^A\}], \mathbb{S}, \mathbb{H}[o.\f \leftarrow o'] \\
& & & \quad\quad \mbox{ where } o = \mathbb{S}(\x^A) \mbox{ and } o' = \mathbb{S}(\y^A) \\ 
\mbox{READ} & \llbracket {\sf y' = x'.f} \rrbracket (A, \mathbb{C}, \mathbb{S}, \mathbb{H}) & = & \mathbb{C}[(\y')^A \leftarrow \mathbb{C}(o.\f)], \mathbb{S}[(\y')^A \leftarrow o'], \mathbb{H} \\
& & & \quad\quad \mbox{ where } o = \mathbb{S}((\x')^A) \mbox{ and } o' = \mathbb{H}(o.\f) \\
\end{array}
\]
Note that we do not record $o.\f$ as a chain source in READ. It is an invariant of $\mathbb{C}$ that the values in map $\mathbb{C}$ are sets that contain only variables, e.g., $\x^A$. 
We do record $o.\f$ as chain target in WRITE because it serves as an intermediary in the chain from \y in {\sf x.f = y} to $\y'$ in ${\sf y' = x'.f}$. 
The semantics of chains elides heap objects, just as the static flow graph $G_{\sc RI}$ does (recall the graph in~\secref{sec:overviewCFL}). 
The goal is to establish a connection between the concrete domain of chains and the abstract domain of annotated paths in the flow graph.

Allocation ${\sf x = new } \; o$ creates the trivial chain $(\x^A,\x^A)$:
\[
\begin{array}{llll}
\mbox{ALLOC} & \llbracket {\sf x = new } \; o \rrbracket (A, \mathbb{C}, \mathbb{S}, \mathbb{H}) & = & \mathbb{C}[\x^A \leftarrow \{\x^A\}], \mathbb{S}[\x^A \leftarrow o], \mathbb{H}[o.\f \leftarrow {\sf null}] \\
& & & \quad\quad \mbox{ where } o \mbox{ is a fresh object}
\end{array}
\]

Calls and returns are standard. A call entails a fresh frame identifier $f$, which is appended to context $A$ to form the new context $A\!\oplus\! f$; calls record chains that  
reflect the standard flow from actuals to formals:
\[
\begin{array}{llll}
\mbox{CALL} & \llbracket {\sf x = y.m(z)} \rrbracket(A, \mathbb{C}, \mathbb{S}, \mathbb{H}) & = &  \mathbb{C}[\this^{A \oplus f} \leftarrow \mathbb{C}(\y^A) \cup \{\y^A\}][\p^{A \oplus f} \leftarrow \mathbb{C}(\z^A) \cup \{\z^A\}], \\
& & & \mathbb{S}[\this^{A \oplus f} \leftarrow \mathbb{S}(\y^A)][\p^{A \oplus f} \leftarrow \mathbb{S}(\z^A)], \mathbb{H} \\
& & & \qquad \qquad \mbox{ where } f \mbox{ is a fresh frame identifier} \\
\end{array}
\]
A return from context $A\!\oplus\! f$ creates chains with target $\x^A$; these chains are due to the standard flow from $\ret^{A\oplus f}$ to the left-hand side of the return assignment $\x^A$:
\[
\begin{array}{llll}
\mbox{RET} & \llbracket {\sf x = y.m(z)} \rrbracket(A \!\oplus \! f, \mathbb{C}, \mathbb{S}, \mathbb{H}) & = &  \mathbb{C}[\x^A \leftarrow \mathbb{C}(\ret^{A\oplus f}) \cup \{\ret^{A\oplus f}\}], \mathbb{S}[\x^A \leftarrow \mathbb{S}(\ret^{A \oplus f})], \mathbb{H} \\
\\
\end{array}
\]


The following lemma allows us to express the points-to relation entailed by $\mathbb{S}$ and $\mathbb{H}$ in terms of the reachability relation entailed by $\mathbb{C}$. 
If a variable $\x^A$ points to some object $o$, then there is a chain $(\w^B,\x^A)$ in $\mathbb{C}$ where $\w^B$ is the local variable at the left-hand side of the allocation 
site of $o$. If we have ${\sf x.f = y}$ in context $A$ followed by ${\sf y' = x'.f}$ in context $A'$, where \x and $\x'$ point to the same object $o$, 
then there is flow from \y to $\y'$. The flow is expressed via paths that capture chains $(\w^B,\x^A)$ and $(\w^B, (\x')^{A'})$. The key idea is that in the abstract, 
we have a path from $\w$ to $\x$ and an \emph{inverse path} from $\x$ to $\w$. The inverse path $\x \patharrow{} \w$ ``combines'' with the path 
$\w \patharrow{} \x'$ which leads to a path $\x \patharrow{} \x'$ and subsequently $\y \patharrow{} \y'$. Recall~\figref{fig:runningExample} and the accompanying graph:
\[{\sf this}_{\sf set} \inverseretarrow{6} {\sf e} \mbox{ and } {\sf e} \callarrow{7} {\sf this}_{\sf get} \mbox{ give rise to } \p \arrow{$w_\f$} {\sf this}_{\sf set} \inverseretarrow{6} {\sf e} \callarrow{7} {\sf this}_{\sf get} \arrow{$r_\f$} {\sf ret}\] which abstracts the chain from $\p^{\langle {\sf main}, f_1\rangle}$ to $\ret^{\langle {\sf main}, f_3\rangle}$. We elaborate later in the paper.

{\lemma 
For every state $\mathbb{C}, \mathbb{S}, \mathbb{H}$ and every object $o \in \mathbb{H}$
\begin{itemize}
\item $\mathbb{S}(\x^A) = o \quad \Rightarrow \quad (\w^B,\x^A) \in \mathbb{C}$ and 
\item $\mathbb{H}(o'.\f) = o \quad \Rightarrow \quad (\w^B,o'.\f) \in \mathbb{C}$
\end{itemize}
where $\w = \mathit{new} \; {\sf C}$ in context $B$ is the creation site of $o$. 
\label{lemma:chainLemma}
}

\begin{proof} Given transition $\llbracket s \rrbracket(A,\mathbb{C},\mathbb{S},\mathbb{H}) = \mathbb{C'},\mathbb{S'},\mathbb{H'}$ 
we show, via case-by-case analysis, that if the lemma holds on $\mathbb{C},\mathbb{S},\mathbb{H}$ then it holds on $\mathbb{C'},\mathbb{S'},\mathbb{H'}$. 
Consider ALLOC. The only change is in $\mathbb{S}$ where $\mathbb{S'}(\x^A)$ now points to a fresh heap object $o$. $\mathbb{C'}$
adds $(\w^A, \w^A)$ which establishes the lemma. The remaining points-to relations have not changed
and the inductive hypothesis entails the lemma. The rest of the statements follow analogously.

\end{proof}

\subsection{Operations on Contexts}
\label{sec:frame}

Next we define several useful operations on stack contexts. $A-B$ is defined when $B$ is a prefix of $A$; it removes $B$ from $A$. 
$A \le B$ is true if and only if $A$ is a prefix of $B$. 
$\Delta A B$ denotes the ``difference'' between context $A$ and context $B$. At the level of the dynamic semantics, $\Delta AB$ is
defined as the tuple $(A-D, B-D)$, where $D$ is the longest common prefix of $A$ and $B$. Such a prefix clearly exists, 
in the worst case it is {\sf main}.

Returning to~\figref{fig:runningExample}, 
consider the flow from \p in context ${\langle {\sf main}, f_2 \rangle}$ of {\sf set}, to \ret in context ${\langle {\sf main}, f_3 \rangle}$ 
of {\sf get} (recall that $f_3$ is the frame invoked in line 7). We have:
\[\Delta \langle {\sf main}, f_2 \rangle \langle {\sf main}, f_3 \rangle = (\langle f_2\rangle, \langle f_3\rangle) \]

Informally, the first term in the tuple is the sequence of \emph{returns} and the second term is the sequence of 
\emph{calls} that happen when state transitions from $A$ to $B$. In the example, stack state transitions from $\langle {\sf main}, f_2 \rangle$
to $\langle {\sf main}, f_3 \rangle$, by first returning from $f_2$ into {\sf main} then calling into $f_3$ from {\sf main}.

In~\secref{sec:soundness} we define an abstraction function over stack contexts and $\Delta AB$
that helps establish that for every chain from $\x^A$ to $\y^B$, $G_{\sc BI}$ and $G_{\sc RI}$ contain appropriately annotated paths from \x to \y.
As stated earlier, a key difficulty arises in the reasoning about inverse edges.

\section{CFL-reachability}
\label{sec:CFL}

\secref{sec:bidirectionalFlowGraph}
describes the construction of $G_{\sc BI}$. \secref{sec:imprecision} argues that there is inherent imprecision in $G_{\sc BI}$. 
\secref{sec:referenceImmutability} discusses reference immutability and \secref{sec:referenceImmutabilityGraph}
describes the construction of $G_{\sc RI}$ based on knowledge of reference immutability.


\subsection{Bidirectional Flow Graph $G_{\sc BI}$}
\label{sec:bidirectionalFlowGraph}


As it is customary for CFL-reachability, we build a static \emph{flow graph}
that represents data dependences between variables. The nodes in the graph are 
(context-insensitive) program variables, e.g., \x, \y, \this. The edges capture flow from one variable to another
and paths capture dynamic chains as defined in~\secref{sec:dynamicSemantics}. The standard approach in the presence of mutable data is to build a 
\emph{bidirectional} flow graph (as in~\cite{Zhang:2017POPL, Sridharan:2006PLDI, Xu:2009ECOOP, Chatterjee:2018POPL, Spath:2019POPL, Sampson:2011PLDI, Shankar:2001USENIXSecurity}
among other works)
where \emph{inverse edges} handle updates safely. We call this graph $G_{\sc BI}$. Below we describe the semantics 
of $G_{\sc BI}$ construction. Solid arrows $\arrow{}$ denote forward edges, and dashed arrows $\inversearrow{}$ denote inverse edges.

An assignment statement contributes \emph{direct} (i.e., intraprocedural) 
edges as follows:
\[\mbox{ASSIGN}\quad \llbracket \code{x = y}\rrbracket (G_{\sc BI}) = G_{\sc BI} \cup \{ \y \arrow{d} \x \} \cup \{ \x \inversearrow{d} \y \} \]



A field write statement \code{x.f = y} contributes a forward edge from \y to \x annotated with $w_\f$
and an inverse edge from \x to \y annotated with $r_\f$:
\[\mbox{WRITE }\quad \llbracket \code{x.f = y} \rrbracket (G_{\sc BI}) = G_{\sc BI} \cup \{ \y \arrow{$w_\f$} \x \} \cup \{ \x \inversearrow{$r_\f$} \y \} \]
The meaning of the forward edge is that \y flows (is written) into field \f of \x. The corresponding inverse edge reverses 
the direction of the flow and the field annotation, denoting that field \f of \x is read into \y.
Similarly, a field read statement $\code{y' = x'.f}$ contributes
\[\mbox{READ }\quad \llbracket \code{y' =  x'.f}\rrbracket(G_{\sc BI}) = G_{\sc BI} \cup \{ \x' \arrow{$r_\f$} \y' \} \cup \{ \y' \inversearrow{$w_\f$} \x' \} \]


The following example illustrates once again the need for inverse edges. From now on, we will underline sinks in the graphs to improve readability.
\begin{center}
\begin{tabular}{lll}
\begin{minipage}{4cm}
\begin{lstlisting} 
x = y;
A a = ... ; 
x.f = a;  
%\clow% A b = y.f;  
\end{lstlisting} 
\end{minipage}

&

$\Rightarrow$

&

\begin{tikzcd}[cramped, sep=large]
{\sf a} \arrow[r, "{w_\f}" description] & \x \arrow[r, dashed] \arrow[l, bend left, dashed, "{r_\f}" description] & \y \arrow[r, "{r_\f}" description] \arrow[l, bend right] & {\sf \underline{b}} \arrow[l, bend left, dashed, "{w_\f}" description]
\end{tikzcd} \\

\end{tabular}
\end{center}
Inverse edge $\x \inversearrow{d} \y$ is necessary to establish the path from {\sf a} to {\sf b}.

A method call (method entry) creates the expected forward \emph{call} edges from actual arguments
to formal parameters and the inverse \emph{return} edges:
\[\mbox{CALL}\quad \llbracket i\!: \code{x = y.m(z)} \rrbracket(G_{\sc BI}) = G_{\sc BI} \cup \{ \y \callarrow{i} \this \} \cup \{ \z \callarrow{i} \p \} \cup \{ \this \inverseretarrow{i} \y \} \cup \{ \p \inverseretarrow{i} \z \} \]
The standard CFL-reachability annotation $(_i$ marks call entry at call site $i$.
A method return (exit) 
creates a \emph{return} edge from the return value to the left-hand-side 
of the call assignment, plus the inverse \emph{call} edge:
\[\mbox{RET}\quad \llbracket i\!:\code{x = y.m(z)}\rrbracket (G_{\sc BI}) = G_{\sc BI} \cup \{ \ret \retarrow{i} \x \}\cup\{ \x \inversecallarrow{i} \ret \} \]

The CFL-reachability problem is to decide whether there is a path from \x to \y in $G_{\sc BI}$ with properly matched call/ret annotations,  
and properly matched write/read annotations. Note the arbitrary interleaving of $(,)$ and $w,r$ annotations.  
Due to recursion in both call-transmitted and heap-transmitted dependences, the CFL-reachability problem is 
undecidable~\cite{Reps:2000TOPLAS} and analyses have to adopt approximations. 
One approximation essentially replaces all $(_i$ and $)_j$ with {\sf d} annotations and leaves all $w,r$ annotations, 
thus treating call-transmitted dependences context-insensitively and heap-transmitted dependences fully precisely. 
This approach is known as CIFS (contest-insensitive, field-sensitive) CFL-reachability~\cite{Zhang:2017POPL}. 
Another approximation replaces all $w_\f$ and $r_\g$ annotations with {\sf d} thus treating call-transmitted dependences 
fully precisely and heap-transmitted dependences approximately. This approach is known as
CSFI (context-sensitive, field-insensitive) CFL-reachability. 
Consider the CR and PG context-free grammars in~\figref{fig:cfgs}. CSFI amounts to PG-reachability, 
i.e., if the path from \x to \y is in the language L(PG), then \y is PG-reachable from \x. Analogously, 
CIFS amounts to CR-reachability, i.e., if the path from \x to \y is in L(CR), then \y is CR-reachable 
from \x. There are many approximations in the literature. 


\subsection{Imprecision in $G_{\sc BI}$}
\label{sec:imprecision}

Bidirectionality of $G_{\sc BI}$ causes imprecision. Consider the example:

\begin{figure}[t]
\begin{center}
\[
\begin{array}{cc}
\begin{array}{lll}
R & ::= & )_i \;\; | \;\; )_i \; M \;\; | \;\; )_i \; R \;\; | \;\; M \; R \\
C & ::= & (_i \;\; | \;\; (_i \; M \;\; | \;\; (_i \; C \;\; | \;\; M \; C \\
M & ::= & {\sf d} \;\; | \;\; (_i \; M \; )_i \;\; | \;\; {\sf d} \; M \;\; | \;\; (_i \; M \; )_i \; M \\
\end{array}
&

\begin{array}{lll}
G & ::= & r_\f \;\; | \;\; r_\f \; B \;\; | \;\; r_\f \; G \;\; | \;\; B \; G \\
P & ::= & w_\f \;\; | \;\; w_\f \; B \;\; | \;\; w_\f \; P \;\; | \;\; B \; P \\
B & ::= & {\sf d} \;\; | \;\; w_\f \; B \; r_\f \;\; | \;\; {\sf d} \; B \;\; | \;\; w_\f \; B \; w_\f \; B \\
\end{array}
\\
& \\
\mbox{(a) Call-Return (CR) CFG} & \mbox{(b) Put-Get (PG) CFG}

\end{array}
\]
\end{center}
\caption{The CR grammar in (a) captures well-formed call-transmitted paths. 
$R$ captures strings with outstanding (R)eturn edges, e.g., $(_1 \; {\sf d} \; )_1 \; )_2$. 
$C$ captures strings with outstanding (C)all edges, 
e.g., $(_1$, and $M$ captures same-level paths, i.e., paths with matching call and return edges,
e.g., $(_1 \; {\sf d} \; )_1$.
The PG grammar in (b) captures heap-transmitted paths. $P$ captures strings with
outstanding [G]etfield (i.e., field read) edges, e.g., $w_\f \; {\sf d} \; r_\f\; r_g$. 
$P$ captures strings with outstanding [P]utfield (i.e., field write) edges, e.g., $w_\f \; {\sf d}$. 
Finally, $B$ captures strings with matching field write and field read edges, 
e.g., $w_\f \; {\sf d} \; r_\f$.}
\label{fig:cfgs}
\end{figure}


\begin{center}
\begin{tabular}{lll}
& & \\
\begin{minipage}{4cm}
\begin{lstlisting}
if (c) {
  x = a;
}  
else {
  x = b;
}
\end{lstlisting}
\end{minipage}

& 

$\Rightarrow$

&

\begin{tikzcd}[cramped, sep=large]
& {\sf a}  \arrow[dl]  \\
\x \arrow[ur, bend right, dashed] \arrow[dr, bend left, dashed] & \\
& {\sf b} \arrow[ul] 
\end{tikzcd} \\


& &

\end{tabular}
\end{center}

The two forward edges have corresponding inverse edges, creating paths from \x to {\sf b} and from {\sf a} to {\sf b}. 
If {\sf b} is negative, then both {\sf x} and {\sf a} spuriously become negative. The imprecision propagates, ``polluting''
variables throughout the program. 

\cite{Zhang:2017POPL} report that linear conjunctive reachability over $G_{\sc BI}$, a novel highly-precise 
CFL-reachability technique, achieves only modest improvement compared to CSFI over $G_{\sc BI}$; 
they conjecture that this is due to the bidirectionality of $G_{\sc BI}$. 
In contrast, the technique achieves substantial precision improvement over $G_{\sc RI}$~\cite{Zhang:2017POPL}. 
(Recall that $G_{\sc RI}$ is the strict subgraph of $G_{\sc BI}$ that avoids certain inverse edges; we describe  
reference immutability and the construction of $G_{\sc RI}$ in~\secref{sec:referenceImmutability} and~\secref{sec:referenceImmutabilityGraph}.) 
Similarly, \cite{Milanova:2013FTfJP} report substantial negative impact of bidirectionality on 
taint analysis---a taint analysis that removes infeasible edges based on reference immutability information infers 
20\% to 79\% fewer negative variables compared to a taint analysis on the bidirectional graph. (Recall that
the goal of taint analysis is to propagate negative sinks to as few program variables as possible.)

We conducted experiments using the implementation and benchmarks of DroidInfer~\cite{Huang:2015ISSTA} that are made publicly 
available with the artifact of DroidInfer.\footnote{The artifact is publicly available at \url{https://www.cs.rpi.edu/~dongy6/issta-artifact-2015/issta-artifact-2015.zip}; 
the code is available at \url{https://github.com/proganalysis/type-inference.}} 
We ran the analysis on 77 Android apps from the artifact, excluding apps that crashed and apps that contained 0 sources or 0 sinks. 
We ran the taint analysis over $G_{\sc BI}$ and over $G_{\sc RI}$.
\emph{Analysis over $G_{\sc RI}$ reduced the number of reported errors by 41\% on average per app compared to $G_{\sc BI}$. 
An error essentially corresponds to a \emph{(source, sink)} pair, or in other words, to a report of flow from \emph{source} to \emph{sink}; 
the analysis proved 120\% more apps safe compared to analysis over $G_{\sc BI}$}. These results confirm that removing infeasible 
inverse edges benefits analysis precision. 

Inverse edges capture bidirectionality of aliasing. Suppose reference $\x$ flows to $\x'$. Then for all fields \f, $\x.\f$ and $\x'.\f$ are aliases. 
If there is a \emph{write} into $\x'.\f$ then we should be able to \emph{read} that value out of $\x.\f$ and the inverse path from $\x'$ to $\x$ enables that. 
However, if $\x'$ is an immutable reference, then there is no need for the inverse path from $\x'$ to $\x$. 

\subsection{Reference Immutability}
\label{sec:referenceImmutability}


A key goal of this work is to formalize the notion of the inverse edge and understand the role of reference immutability in removing infeasible inverse edges.
Reference immutability~\cite{Tschantz:2005OOPSLA,Huang:2012OOPSLA,Milanova:2018ECOOP} 
ensures that a \emph{readonly} (also called immutable) reference cannot be used to mutate the state of the object it refers to, 
including its transitive state. For example, \x is not \readonly in {\sf y = x; y.f = z;} because it is used in a way that leads to a mutation 
of the object it references; similarly \x is not \readonly in {\sf y.f = x; z = y; w = z.f; w.g = 10;}.

Reference immutability semantics is typically described in terms of a type system, most notably Javari~\cite{Tschantz:2005OOPSLA} and ReIm~\cite{Huang:2012OOPSLA}. Recent work has shown that it can be described in terms of CFL-reachability as well~\cite{Milanova:2018ECOOP}. In this paper, we largely follow the CFL-reachability interpretation of ReIm given in~\cite{Milanova:2018ECOOP}.

In the style of~\secref{sec:bidirectionalFlowGraph}, we analyze each program statement and build a \emph{reference immutability graph} $G$
then decide immutability/mutability of references based on reachability over $G$.  Informally, \x is mutable if it reaches a variable that is updated, such as \y and {\sf w} above.
An assignment statement contributes the following forward edge to $G$. Notably, there are no inverse edges in reference immutability:
\[\mbox{ASSIGN}\quad \llbracket \code{x = y}\rrbracket(G) = G\cup \{ \y \arrow{d} \x \} \]

Similarly to~\secref{sec:bidirectionalFlowGraph} a method call creates call and return forward edges as expected:
\[\mbox{CALL/RET} \quad \llbracket i\!: \code{x = y.m(z)}\rrbracket(G) = G\cup \{\y \callarrow{i} \this \} \cup \{ \z \callarrow{i} \p \} \cup \{ \ret \retarrow{i} \x \} \]

A difference with~\secref{sec:bidirectionalFlowGraph} arises in the handling of heap-transmitted dependences. 
Together, a pair of field write \code{x.f = y} and field read $\code{y' = x'.f}$ contribute the following edges
\[\mbox{WRITE/READ} \quad \llbracket \code{x.f = y}, \; \code{y' = x'.f} \rrbracket (G) = G \cup \{ \y \arrow{d} \x.\f \approxarrow \x'.\f \arrow{d} \y' \} \cup \{ \x' \arrow{d} \x'.\f \}\]
The first set of edges creates a path from \y to $\y'$. Here $\x.\f \approxarrow \x'.\f$ is an \emph{approximate edge}.
In terms of Reps' terminology~\cite{Reps:2000TOPLAS}, the reference immutability semantics models heap-transmitted 
dependences approximately. The approximation comes from the fact that 
regardless of whether $\x.\f$ and $\x'.\f$ are aliases, the semantics propagates $\y'$ back to $\y$. 
If there is an update of $\y'$, e.g., $\y'.\g = 5$, \y will be determined mutable. This is precisely what
makes inverse edges unnecessary here. A field read contributes an additional edge $\x' \arrow{d} \x'.\f$
needed to capture mutation to transitive state.
If there is a mutation of $\y'$, this last edge propagates the mutation back to $\x'$.

An \emph{update} is a node \y such that there is a write statement of the form $\code{y.f = z}$.
Essentially, we are interested if there is an ``appropriately annotated'' path in $G$ from a reference \x to an update node. For example, 
consider the code and its corresponding graph $G$. {\sf id} is the standard identity function:
\[i: \code{x = id(y); z = x; w = z.f; w.g = 10;} \quad \Rightarrow \quad \y \callarrow{i} \p \arrow{d}  \ret \retarrow{i} \x \arrow{d} \z \arrow{d} \z.\f \arrow{d} {\sf w} \]
There is a path from \y to the update {\sf w} with properly matched parenthesis which means that {\sf y} is a mutable reference; 
the object $o$ that \y refers to is modified through \y as the assignment of \y to parameter \p of {\sf id}
leads to the mutation $o.\f.\g = 10$.

A \emph{call-transmitted} path contains no approximate edges and it is well-formed in CR, i.e., its annotations form a string in 
L(CR) (recall~\figref{fig:cfgs}(a)). 
A \emph{heap-transmitted} path is made up of call-transmitted paths interleaved with approximate edges. 
Let $U$ denote the set of all call-transmitted and heap-transmitted paths to updates in $G$.
We break paths in $U$ into 2 categories:
(1) \emph{$M/C$-paths} and (2) \emph{$R$-paths}. A path in $U$ is an $M/C$-path
if and only if the annotations on the \emph{leading}, i.e., first, call-transmitted path 
form a string in the language described by $M$ (i.e., calls and returns balance out), or they form a string in 
the language described by $C$ (i.e., outstanding calls).
For example, ${\sf e} \callarrow{6} \this_{\sf set}$ is a $C$-path.
A path in $U$ is an $R$-path if and only if the edge annotations on the leading call-transmitted path 
form a string in $R$, e.g., $\ret \retarrow{7} {\sf b}$ is an $R$-path. 
We consider that there is a trivial path from each node to itself, so an update node is \mutable.

\begin{itemize}
\item[(1)] Examples of $M/C$-paths, following the graph in~\secref{sec:overviewCFL}, include (assuming {\sf b} is an update):
\[{\sf e} \callarrow{7} \this_{\sf get} \arrow{d} \this_{\sf get}.\f \arrow{d} \ret \retarrow{7} {\sf \underline{b}} \mbox{     (leading call-transmitted path is an $M$-path)} \]
and
\[{\sf a} \callarrow{6} \p \arrow{d} \this_{\sf set}.\f \approxarrow \this_{\sf get}.\f \arrow{d} \ret \retarrow{7} {\sf \underline{b}} \mbox{     (leading call-transmitted path is a $C$-path)} \]
\item[(2)] examples of $R$-paths include (again assuming {\sf b} is an update):
\[\this_{\sf get} \arrow{d} \this_{\sf get}.\f \arrow{d} \ret \retarrow{7} {\sf \underline{b}} \]
\end{itemize}

Reference immutability~\cite{Tschantz:2005OOPSLA, Huang:2012OOPSLA, Milanova:2018ECOOP} classifies variables as follows:
\[
\begin{array}{ll}
\x \mbox{ is } \mutable & \mbox{if there is an $M/C$-path from \x to an update} \\
\x \mbox{ is } \poly & \mbox{if there is no $M/C$-path but there is an $R$-paths from \x to an update} \\
\x \mbox{ is } \readonly & \mbox{if there is neither $M/C$-path nor $R$-path from \x to an update}
\end{array} 
\]
In the above examples {\sf e} and {\sf a} are both \mutable and $\this_{\sf get}$ and \ret are \poly, due to the $R$-paths to the update. 
If we removed the assumption that {\sf b} is an update, then {\sf e}, {\sf a}, $\this_{\sf get}$ and \ret above would be \readonly. 
Intuitively, an $M/C$-path from \x unequivocally makes \x mutable---mutation is immediate or within the immediate call.
An $R$ path does not necessarily make \x mutable; it is mutable in the context of the $R$-path, but it may be readonly in the context of other paths. 
For example, {\sf \mutable b = e.get()} makes $\this_{\sf get}$ \mutable in the context of this path (precisely the $R$-path above), 
however, {\sf \readonly d = g.get()} leaves $\this_{\sf get}$ \readonly in the context of this different path: $\this_{\sf get} \arrow{d} \this_{\sf get}.\f \arrow{d} \ret \retarrow{9} {\sf d}$.

Lastly, we summarize the \emph{adaptation operation}~\cite{Dietl:2011ECOOP,Huang:2012OOPSLA} 
which plays a role in the theorem that proves soundness of CFL-reachability over $G_{\sc RI}$.
Viewpoint adaptation, written as $\x \rhd \p$, interprets the immutability type of \p in the context of \x.
We simplify the adaptation notation to work on variables rather than immutability qualifiers. $\x \rhd \p$ refers to the 
immutability types of \x and \p, not to the variables themselves. 
\[
\begin{array}{rllll}
  \x &\rhd& \readonly &=& \readonly \\
  \x &\rhd& \mutable &=&  \mutable \\
  \x  &\rhd& \poly &=&  \x \\
\end{array}
\]
 A \readonly or \mutable{} \p remains as is, regardless of the context \x. A \poly{} \p takes the type of \x, which makes
adaptation interesting. In reference immutability, left-hand sides \x of call assignments serve as contexts 
 of adaptation. This is because, intuitively, there are multiple paths from a \poly variable \p in \m, one through each one of the left-hand sides of calls to \m;
 the mutability status of each path is determined by the left-hand side of the call.
 If $i: \; {\sf x = m(z)}$ is such that \x is \mutable, then \p is mutable at $i$. 
 Returning to the examples above, at $\mutable \; {\sf b = e.get()}$ we have ${\sf b} \rhd \this_{\sf get} = \mutable$ and ${\sf b} \rhd \ret = \mutable$ reflecting 
 that $\this_{\sf get}$ and \ret are \mutable at the call to {\sf get} at 7 because the left-hand side, i.e., the context {\sf b} is \mutable. 
 In contrast, at $\readonly \; {\sf d = g.get()}$ we have ${\sf d} \rhd \this_{\sf get} = \readonly$ 
 and ${\sf d} \rhd \ret = \readonly$.~\footnote{Standard viewpoint adaptation~\cite{Dietl:2011ECOOP} 
 uses the receiver as context of adaptation, e.g., in \code{x = y.m(z)}, the context of adaptation is \y. The use of left-hand side \x is non-standard 
 and reflects the specific semantics of reference immutability; ~\citep{Huang:2012OOPSLA} elaborates on this.}
 
 We generalize adaptation to a sequence of contexts, 
 e.g., $\x_i \rhd \x_j \rhd \x_k \rhd \ret$ adapts \ret in a larger context. Suppose $\ret$ is \poly, and left-hand-sides $\x_k$ at call $k$ and
 $\x_j$ at call $j$ are \poly, however, $\x_i$ at the outermost call site $i$ is \readonly; \ret in context $\x_i \rhd \x_j \rhd \x_k$ is \readonly. 

\begin{center}
\begin{minipage}{5.5cm}
\begin{lstlisting}
A id0(A p0) { ret0 = p0; }

A id1(A p1) { ret1 = id0(p1); }

A id2(A p2) { ret2 = id1(p2); }
...
mutable b = id2(a); 
b.f = z;
readonly d = id2(c);
\end{lstlisting}
\end{minipage}
\end{center}
In the above example, {\sf ret0}, {\sf p0}, {\sf ret1}, {\sf p1}, {\sf ret2}, and {\sf p2} are all \poly due to the $R$-paths to update {\sf b}. In context 7 {\sf ret0}  
is interpreted as ${\sf b} \rhd {\sf ret2} \rhd {\sf ret1} \rhd {\sf ret0} = \mutable \rhd \poly \rhd \poly \rhd \poly = \mutable$. ${\sf b} \rhd {\sf ret2} \rhd {\sf ret1}$
abstracts the stack context of {\sf id0} that line 7 initiates. 
 
\subsection{Flow Graph $G_{\sc RI}$}
\label{sec:referenceImmutabilityGraph}



We use reference immutability to build a new flow graph $G_{\sc RI}$ without certain infeasible inverse edges.
In summary, explicit and implicit assignments forgo inverse edges if the left-hand-side of the assignment is \readonly
as illustrated by the rule for assignment statement. 
\[\mbox{ASSIGN}\quad \llbracket \code{x = y} \rrbracket(G_{\sc RI}) = \left\{ 
\begin{array}{ll}
G_{\sc RI} \cup \{ \y \arrow{d} \x \} \cup \{ \x \inversearrow{d} \y \} & \mbox{ if \x is not \readonly } \\
G_{\sc RI} \cup \{ \y \arrow{d} \x \} & \mbox{ otherwise } \\
\end{array}
\right.
\]
The rules for WRITE, READ, CALL and RET are analogous. CALL $i\!: \code{x = y.m(z)}$ adds inverse edge $\this \inverseretarrow{i} \y$ when $\x \rhd \this \neq \readonly$, 
and it adds $\p \inverseretarrow{i} \z$ when $\x \rhd \p \neq \readonly$. RET adds $\x \inversecallarrow{i} \ret$ when $\x \rhd \ret \neq \readonly$.

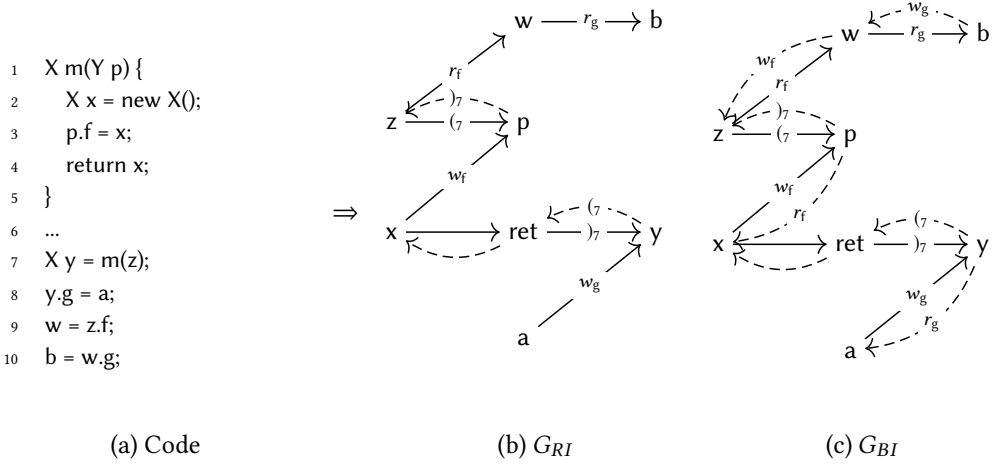
\begin{figure}
\begin{tabular}{cccc}
\begin{minipage}{4cm}
\begin{lstlisting}
X m(Y p) { 
  X x = new X(); 
  p.f = x; 
  return x; 
}
...
X y = m(z);
y.g = a;
w = z.f;
b = w.g;
\end{lstlisting}
\end{minipage}

&
$\Rightarrow$ 

& 

\begin{minipage}{4cm}
\begin{tikzcd}[cramped, sep=large]
   & \w \arrow[r,"{r_\g}" description] & {\sf b} \\ 
\z \arrow[ur,"{r_\f}" description]  
    \arrow[r, "{(_7}" description] & \p \arrow[l, dashed, bend right, "{)_7}" description] & \\       
\x \arrow[ur,"{w_\f}" description] 
    \arrow[r] & \ret \arrow[r, "{)_7}" description] 
                            \arrow[l, dashed, bend left] & \y \arrow[l, dashed, bend right, "{(_7}" description ]  \\
& {\sf a} \arrow[ur, "{w_\g}" description] & \\
& & &
\end{tikzcd} 
\end{minipage}

&

\begin{minipage}{4cm}
\begin{tikzcd}[cramped, sep=large]
   & \w \arrow[r, "{r_\g}" description ] \arrow[dl, dashed, bend right, "{w_\f}" description] & {\sf b} \arrow[l, dashed, bend right, "{w_\g}" description] \\ 
\z \arrow[ur,"{r_\f}" description]  
    \arrow[r, "{(_7}" description] & \p \arrow[l, dashed, bend right, "{)_7}" description]
                                      \arrow[dl, dashed, bend left, "{r_\f}" description] &  \\       
\x \arrow[ur,"{w_\f}" description] 
    \arrow[r] & \ret \arrow[r, "{)_7}" description] 
                            \arrow[l, dashed, bend left] & \y \arrow[l, dashed, bend right, "{(_7}" description ] 
                                                                               \arrow[dl, dashed, bend left, "{r_\g}" description ] \\
& {\sf a} \arrow[ur, "{w_\g}" description] & \\
& & &
& & &
\end{tikzcd} 
\end{minipage} \\

(a) Code & & (b) $G_{\sc RI}$ & (c) $G_{\sc BI}$ 

\end{tabular}
\caption{An example contrasting $G_{\sc RI}$ and $G_{\sc BI}$. $G_{\sc RI}$ retains the path from {\sf a} to {\sf b} but avoids multiple infeasible paths from $G_{\sc BI}$, e.g., the path fro {\sf b} to {\sf a}.}
\label{fig:example3}
\end{figure}

Consider the example in \figref{fig:example3}. 
\p and \y are updates and therefore \mutable, and \z is \mutable as well due 
to the $C$-path to update \p: $\z \callarrow{7} \p$. \x and \ret are \poly due to the $R$-path $\x \arrow{d} \ret \retarrow{7} \y$.
However, \w, {\sf a}, {\sf b}, as well as fields \f and \g are \readonly, as there is neither $M/C$-path
nor $R$-path to an update. The rules above entail only 3 inverse edges: (1) $\p \inverseretarrow{7} \z$, 
(2) $\y \inversecallarrow{7} \ret$, and (3) $\ret \inversearrow{d} \x$, precisely the
edges needed to capture the path from {\sf a} to {\sf b}. 

Our key result is that paths in $G_{\sc RI}$ with properly matched $w/r$ and call/ret annotations
capture all chains as defined in~\secref{sec:dynamicSemantics}.

\section{Soundness of CFL-reachability over $G_{\sc RI}$}
\label{sec:soundness}


To prove soundness of reachability over $G_{\sc RI}$, we first define the abstraction function 
$\alpha(A)$ over stack contexts $A$: \[\alpha(\langle {\sf main}, f_1, f_2 ... f_n \rangle) =  \langle i_1,i_2 ... i_n \rangle \]
where $i_1$, $i_2$, ... $i_n$ are the static call sites that triggered frames $f_1$, $f_2$, ... $f_n$. $\alpha(A)$ 
extends to partial contexts $A$ as follows: $\alpha(\langle f_2 ... f_n \rangle) =  \langle i_2 ... i_n \rangle$.
\[\alpha(B) \rhd \y = \alpha(\langle {\sf main}, f_1, f_2 ... f_n \rangle) \rhd \y = \x_{i_1} \rhd \x_{i_2} \rhd ... \rhd \x_{i_n} \rhd \y \]
where $\x_{i_1}$, $\x_{i_2}$, ... $\x_{i_n}$ are the left-hand sides of call assignments $i_1$, $i_2$, ... $i_n$. 

We define the abstract difference, $\alpha(\Delta AB)$, as the tuple $(\alpha(A-D), \alpha(B-D))$ where $D$ is the longest common prefix of $A$ and $B$. 
We will denote these tuples as $(\mathit{ret}, \mathit{call})$ as $\alpha(A-D)$ abstracts a certain \emph{return sequence} and 
$\alpha(B-D)$ abstracts a certain \emph{call sequence} as we discussed in~\secref{sec:frame}. 
We define the concatenation operation over abstract differences as follows: 
\[ (\mathit{ret}_1, \mathit{call}_1) \oplus (\mathit{ret}_2, \mathit{call}_2) = \left\{ 
\begin{array}{ll}
(\mathit{ret}_1, (\mathit{call}_1 \! - \! \mathit{ret}_2)\!+\!\mathit{call}_2) & \mbox{if } \mathit{ret}_2 \mbox{ is a suffix of } \mathit{call}_1 \\
(\mathit{ret}_1 \! + \! (\mathit{ret}_2\!-\!\mathit{call}_1), \mathit{call}_2) & \mbox{if } \mathit{call}_1 \mbox{ is a suffix of } \mathit{ret}_2 \\
\mathit{undefined} & \mbox{otherwise} 
\end{array}
\right.
\]
In the above, we overload \emph{minus} $-$ to subtract a suffix, not just a prefix of a string and $+$ is just standard string concatenation.
Consider $\alpha(\Delta AB) = (\alpha(A-D), \alpha(B-D)) = (\mathit{ret}_1, \mathit{call}_1)$ and $\alpha(\Delta BC) = (\alpha(B-E), \alpha(C-E)) = (\mathit{ret}_2, \mathit{call}_2)$. 
If we think of $\mathit{ret}$ strings as strings of closing parentheses, i.e., $\mathit{ret} = \langle i_1,i_2 ... i_n \rangle = \; )_{i_n} ... )_{i_2} \; )_{i_1}$, 
and of $\mathit{call}$ strings as strings of opening parentheses, i.e., $\mathit{call} =  \langle j_1,j_2 ... j_m \rangle = \; (_{j_1} \; (_{j_2} ... (_{j_m}$, then concatenation
cancels out call annotations $(_k$ and return annotations $)_k$. For the remainder of this section we will interpret
$\alpha(\Delta AB)$ to mean both the tuple $(\mathit{ret},\mathit{call})$ as defined above and the string $)_{i_n} ... )_{i_2} \; )_{i_1} (_{j_1} \; (_{j_2} ... (_{j_m}$.
In the abstract domain, graph $G_{\sc RI}$, we will be looking at paths from \x to \y; a chain $(\x^A, \y^B)$ will map into a path from \x to \y in $G_{\sc RI}$ 
such that the string of unmatched call and return annotations on that path is exactly $\alpha(\Delta AB)$.
If $\mathit{call}_1 = \alpha(B-D)$ is longer than $\mathit{ret}_2 = \alpha(B-E)$, 
then $\alpha(B-D)$ cancels out $\alpha(B-E)$ and the outstanding call annotations are prepended onto $\alpha(C-E)$. 
Analogously, if $\alpha(B-E)$ is longer than $\alpha(B-D)$ then $\alpha(B-E)$ cancels out $\alpha(B-D)$, and
the outstanding return annotations are appended to $\mathit{ret}_1=\alpha(A-D)$. If neither $\mathit{call}_1$ cancels out $\mathit{ret}_2$, 
nor $\mathit{ret}_2$ cancels $\mathit{call}_1$, then concatenation is undefined because the strings represent distinct runtime contexts.

As an example, return to~\figref{fig:runningExample}. $\p^{\langle {\sf main}, f_1 \rangle}$ flows to $\ret^{\langle {\sf main}, f_3 \rangle}$, 
and we are interested in the abstract difference $\alpha(\Delta \langle {\sf main}, f_1 \rangle \langle {\sf main}, f_3 \rangle)$ which is $(\langle 6\rangle, \langle 7\rangle)$ or just the string $)_6 (_7$. Similarly, since $\ret^{\langle {\sf main}, f_3 \rangle}$ flows to ${\sf b}^{\langle {\sf main}\rangle}$ we are interested in $\alpha(\Delta \langle {\sf main}, f_3 \rangle \langle {\sf main} \rangle)$
which is just $)_7$, with an empty call sequence. $)_6 (_7 \; \oplus \; )_7$ equals $)_6$. 

The concatenation lemma below states that if we have the strings $\alpha(\Delta AB)$ and $\alpha(\Delta BC)$ their concatenation as defined above produces 
$\alpha(\Delta AC)$, precisely the abstraction of $\Delta AC$.

\begin{lemma}
$\alpha(\Delta AB) \oplus \alpha(\Delta BC) = \alpha(\Delta AC)$
\label{abstractDifferenceLemma}
\end{lemma}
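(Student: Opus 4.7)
The plan is to establish the identity by case analysis on the relative positions of the three LCPs involved. Set $D = $ LCP$(A,B)$, $E = $ LCP$(B,C)$, and $F = $ LCP$(A,C)$, so that by definition $\alpha(\Delta AB) = (\alpha(A-D), \alpha(B-D))$, and analogously for the other two differences. Since both $D$ and $E$ are prefixes of $B$, one must be a prefix of the other; by the symmetry built into the definition of $\oplus$, I would first do the case $D \le E$ and then invoke a dual argument for $E < D$.

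Under $D \le E$, the first step is to verify the precondition of the first clause of $\oplus$, namely that $\mathit{ret}_2 = \alpha(B-E)$ is a suffix of $\mathit{call}_1 = \alpha(B-D)$. This follows from the decomposition $B - D = (E - D) + (B - E)$ together with the fact that $\alpha$ distributes over concatenation of stack-context fragments. Applying the rule then gives $\mathit{call}_1 - \mathit{ret}_2 = \alpha(E-D)$, and concatenation with $\mathit{call}_2 = \alpha(C-E)$ yields $\alpha(E-D) + \alpha(C-E) = \alpha(C-D)$, using $D \le E \le C$. So the combined result is $(\alpha(A-D), \alpha(C-D))$.

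The remaining algebraic content is to identify this pair with $(\alpha(A-F), \alpha(C-F))$, i.e.\ to show $F = D$. Since $D$ is a prefix of $A$ and, transitively via $E$, a prefix of $C$, certainly $D \le F$. For the reverse inequality I would invoke a tree-LCA property: frame identifiers are created uniquely at runtime, each with a unique parent frame, so the set of all reachable stack contexts forms a rooted tree in which the operation LCP coincides with the LCA. A standard fact about trees is that among the three pairwise LCAs of any three nodes, at least two coincide with the LCA of all three; combined with $D \le E$ this leaves only $F = D$, because $F = E$ with $D < E$ would make $E$ a common prefix of $A$ and $B$, contradicting $D = $ LCP$(A,B)$.

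The main obstacle is the boundary case $D = E$. There one must first check that both clauses of $\oplus$ fire and agree (they do, by symmetry, both producing $(\alpha(A-D), \alpha(C-D))$). More delicately, when $D = E$ one must rule out the possibility that the frame immediately following $D$ in $A$ coincides with the one in $C$, which would push $F$ strictly below $D$. I would discharge this by the same frame-uniqueness argument: if that frame were shared by $A$ and $C$, it would have had to be on $B$'s stack at the same depth as well (a frame can be pushed only once during an execution), contradicting $D = $ LCP$(A,B)$. The symmetric case $E < D$ is handled by the second clause of $\oplus$ and an analogous identification $F = E$, completing the proof.
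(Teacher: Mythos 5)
Your proof takes the same route as the paper's: split on $D \le E$ versus $E \le D$, check that $\mathit{ret}_2 = \alpha(B-E)$ is a suffix of $\mathit{call}_1 = \alpha(B-D)$, compute the concatenation as $(\alpha(A-D),\alpha(C-D))$, and then argue that $D$ is the longest common prefix of $A$ and $C$. The difference is in that last identification, and there your version is the more careful one. The paper's one-line justification (``$E-D$ does not overlap with $A-D$'') only excludes an extension of the common prefix of $A$ and $C$ through a frame lying in $E-D$, i.e., a frame that is also on $B$'s stack; it is silent on the boundary case $D = E$, where the frame of $C$ sitting immediately above $D$ is \emph{not} in $B$, yet could in principle coincide with the frame immediately above $D$ in $A$ and push the common prefix strictly past $D$. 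You isolate exactly this case and close it with the push-once discipline of frame identifiers, which is the right ingredient and fills a genuine gap in the paper's argument. One caveat: your step ``it would have had to be on $B$'s stack at the same depth as well'' uses more than the tree structure of contexts --- a frame live in both $A$ and $C$ is necessarily live in $B$ only because a frame occupies the stack over a single contiguous interval of the execution \emph{and} $B$ occurs temporally between $A$ and $C$. That ordering is not part of the lemma's statement; it does hold in every application inside \trmref{theTheorem}, since the three contexts there arise from chains extended over the run, but for three arbitrary reachable contexts (e.g., $A = \langle {\sf main}, f_1, f_2\rangle$, $B = \langle {\sf main}, f_3\rangle$, $C = \langle {\sf main}, f_1, f_4\rangle$ visited in the order $A$, $C$, $B$) the identity fails, so the temporal hypothesis your argument relies on should be stated explicitly rather than left implicit.
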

\begin{proof}
Let $\alpha(\Delta AB) = (\alpha(A-D), \alpha(B-D)) = (\mathit{ret}_1, \mathit{call}_1)$ and let $\alpha(\Delta BC) = (\alpha(B-E), \alpha(C-E)) = (\mathit{ret}_2, \mathit{call}_2)$.
Here $D$ is the longest common prefix of $A$ and $B$ and $E$ is the longest common prefix of $B$ and $C$. 
There are two cases, (1) $D \le E$ and (2) $E \le D$. We argue case (1), case (2) is analogous. Since $D \le E$, it follows that $\mathit{ret}_2 = \alpha(B-E)$ is a suffix of 
$\mathit{call}_1 = \alpha(B-D)$. By the definition of concatenation above we have
\[\alpha(\Delta AB)\oplus \alpha(\Delta BC) = (\alpha(A-D), \alpha(C-E)\!+\!\alpha(E-D)) = (\alpha(A-D), \alpha(C-D))\] 
It remains to make the argument that $D$ is the longest common prefix of $A$ and $C$ which is true because $E-D$ does not overlap with $A-D$ or otherwise the longest common prefix of $A$ and $B$ would have been longer than $D$. Therefore $(\alpha(A-D), \alpha(C-D)) = \alpha(\Delta AC)$ which establishes the statement.
\end{proof}

Our main theorem, Theorem~\ref{theTheorem}, shows that every chain $(\x^A, \y^B)$ in $\mathbb{C}$ is represented by an appropriately annotated path from \x to \y in $G_{\sc RI}$.
We write $\x \patharrow{\alpha(\Delta A B)} \y$ to denote the existence of a path from $\x$ to $\y$ in $G_{\sc RI}$ with a string $s \in $ L(PG) $\cap$ L(CR)\footnote{As it is standard, for the purposes of the intersection, we extend the alphabets of grammars PG and CR to include the symbols of the other grammar.}, where the PG (i.e., $w$ and $r$) component of $s$ is balanced, and the CR (i.e., $($ and $)$) component of $s$ contains exactly the $\alpha(A-D)$ string of unbalanced returns, followed by the $\alpha(B-D)$ string of unbalanced calls. ($D$ is the longest common prefix as expected.)  As an example, 
consider $\x \stackrel{w_\f}{\longrightarrow} \p \retarrow{7} \z  \stackrel{r_\f}{\longrightarrow} \w$ in~\figref{fig:example3}(b);
the field component is balanced while the call/ret component reflects that \x flows from the context of the call at line 7 back into {\sf main}.
It is easy to see that if there is a path $\x \patharrow{\alpha(\Delta A B)} \y$ and a path $\y \patharrow{\alpha(\Delta B C)} \z$, 
then there is a path $\x \patharrow{\alpha(\Delta A B)\oplus \alpha(\Delta B C)} \z$.
By the concatenation lemma, this is exactly $\x \patharrow{\alpha(\Delta A C)} \z$.



\begin{theorem}
\label{theTheorem}
Let $\mathbb{C},\mathbb{S},\mathbb{H}$ be a program state and let $(\x^A, \y^B) \in \mathbb{C}$. The following statements are true.
\begin{itemize}
\item There is a path $\x \patharrow{\alpha(\Delta A B)} \y$ in $G_{\sc RI}$.
\item If $\alpha(B) \rhd \y$ is \mutable according to reference immutability (ReIm), then there is an inverse path $\y \patharrow{\alpha(\Delta B A)} \x$ in $G_{\sc RI}$.
\end{itemize}
\end{theorem}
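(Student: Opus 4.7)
The plan is to prove both claims simultaneously by induction on the length of the execution trace that produces $\mathbb{C}, \mathbb{S}, \mathbb{H}$. In the base case $\mathbb{C}$ is empty and both statements hold vacuously. For the inductive step I would case-analyze on the transition rule firing at the final step (ALLOC, ASSIGN, WRITE, READ, CALL, or RET) and verify that every chain newly added to $\mathbb{C}'$ satisfies the two claims, given that all chains in $\mathbb{C}$ satisfy them by the inductive hypothesis (IH). The main glue is Lemma~\ref{abstractDifferenceLemma}, which ensures that splicing paths respects the $\alpha(\Delta \cdot \cdot)$ annotation.

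Four of the cases are routine. ALLOC introduces only the trivial chain $(\w^A, \w^A)$ captured by the empty path. ASSIGN, CALL, and RET each add at most one new edge to $G_{\sc RI}$ (a direct, call, or return edge, respectively). For each propagated chain $(v, \x^A)$ I would extend the IH path for $(v, \y^A)$ (or $(v, \ret^{A\oplus f})$) by that new edge and appeal to Lemma~\ref{abstractDifferenceLemma} to show the annotation concatenates correctly; and for the newly minted chain $(\y^A, \x^A)$ the new edge itself is the desired path.

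The crucial case is READ $\y' = \x'.\f$ in context $A$. Its freshly-added chains $(v, (\y')^A)$ trace back to a prior WRITE $\x.\f = \y$ executed in some earlier context $A'$, where $\mathbb{S}(\x^{A'}) = \mathbb{S}((\x')^A) = o$. Lemma~\ref{lemma:chainLemma} supplies a common creation site $\w^{B'}$ with chains $(\w^{B'}, \x^{A'})$ and $(\w^{B'}, (\x')^A)$ in $\mathbb{C}$. Part~1 of the IH yields a forward path $\w \patharrow{\alpha(\Delta B' A')} \x$ and a forward path $\w \patharrow{\alpha(\Delta B' A)} \x'$; since $\x$ is used as a writing reference, its reference-immutability type cannot be \readonly, so part~2 of the IH applies to give the inverse path $\x \patharrow{\alpha(\Delta A' B')} \w$. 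Splicing this inverse path with the second forward path and invoking Lemma~\ref{abstractDifferenceLemma} produces $\x \patharrow{\alpha(\Delta A' A)} \x'$. Prepending $\y \arrow{w_\f} \x$, appending $\x' \arrow{r_\f} \y'$, and prepending the IH path from the source of $v$ to $\y$ yields a path from $v$'s variable to $\y'$ in which the $w_\f/r_\f$ pair is balanced and the outstanding call/return annotations form exactly $\alpha(\Delta B A)$, as required.

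Part~2 is handled dually, replaying each case to build inverse paths. The main obstacle I anticipate is verifying, rule by rule, that the inverse edges predicted by the reference-immutability-guided construction of $G_{\sc RI}$ are actually present whenever the hypothesis $\alpha(B) \rhd \y$ is \mutable holds. This hinges on a subsidiary lemma stating that flow in $\mathbb{C}$ preserves non-readonly-ness under adaptation: if $(\x^A, \y^B) \in \mathbb{C}$ and $\alpha(B) \rhd \y$ is \mutable, then every intermediate reference used to realize that chain is non-\readonly{} in its context, so the corresponding inverse edge was added by the construction rules of \secref{sec:referenceImmutabilityGraph}. This invariant, which is essentially the soundness of ReIm's typing rules for assignment, write, read, call, and return, is what guarantees that the inverse edges spliced into the READ argument above are indeed in $G_{\sc RI}$; once it is established the inductive case analysis mirrors that of part~1.
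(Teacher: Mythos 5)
Your proposal follows essentially the same route as the paper's proof: strong induction over the execution trace with a case analysis on the transition rules, Lemma~\ref{lemma:chainLemma} to route the READ case through the allocation site $\w$, Lemma~\ref{abstractDifferenceLemma} to splice annotated paths, and the ReIm typing rules to justify that the required inverse edges are present in the graph. The only detail worth tightening is that the write-side chains you invoke live in the state $\mathbb{C}''$ at the time of the most recent write to $o.\f$, not necessarily in the current $\mathbb{C}$, which is why the induction hypothesis must range over all prior states; the paper makes this explicit.
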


A corollary is that if \y is an update, i.e., we have a write \code{y.f = z}, then there is an inverse path $\y \patharrow{\alpha(\Delta B A)} \x$ in $G_{\sc RI}$. 
(By definition ReIm types an update as \mutable and therefore $\alpha(B) \rhd \y$ is \mutable; the inverse path follows from the second clause of the theorem.)
An inverse path is not necessarily made up of inverse edges, it is just the ``inverse'' of $\x \patharrow{\alpha(\Delta A B)} \y$. 

\begin{proof} The prove is by standard induction and case-by-case analysis of program statements. Given transition $\llbracket s \rrbracket(A,\mathbb{C},\mathbb{S},\mathbb{H}) = \mathbb{C'},\mathbb{S'},\mathbb{H'}$ 
if the lemma holds on \emph{all states} up to $\mathbb{C},\mathbb{S},\mathbb{H}$, including $\mathbb{C},\mathbb{S},\mathbb{H}$, then it holds on $\mathbb{C'},\mathbb{S'},\mathbb{H'}$. We note that even though 
the structure of the proof is standard, the treatment of individual statements is involved. 


Case 1. Consider statement $\x = \y$. By the inductive hypothesis, there is a path $\vv \patharrow{\alpha(\Delta A B)} \y \in G_{\sc RI}$ for every 
$(\vv^A,\y^B) \in \mathbb{C}$. Since there is an edge $\y \rightarrow \x \in G_{\sc RI}$ (by construction of $G_{\sc RI}$), there is a path 
$\vv \patharrow{\alpha(\Delta A B)} \x \in G_{\sc RI}$.
To show the second clause of the theorem, assume $\alpha(B) \rhd \x$ is \mutable. Therefore, $\alpha(B) \rhd \y$ is \mutable, and by the inductive
hypothesis there is an inverse path $\y \patharrow{\alpha(\Delta B A)} \vv \in G_{\sc RI}$. $\alpha(B) \rhd \x = \mutable$ implies that \x is either 
\poly (i.e., there is a $R$-path to an update), or \mutable (an $M/C$-path to update). Thus, there is an inverse edge $\x \rightarrow \y \in G_{\sc RI}$ by construction. 
Adding the inverse edge to the inverse path yields $\x \patharrow{\alpha(\Delta B A)} \vv$, as needed (concatenation lemma).

Case 2. Consider call $i: \; \x = \y.\m(\z)$. By induction, there is path $\vv \patharrow{\alpha(\Delta A B)} \z \in G_{\sc RI}$ for each 
$(\vv^A,\z^B) \in \mathbb{C}$. The dynamic semantics appends the new fresh frame $f$ onto $B$ to get new context $B' = B\! \oplus \! f$, 
and new chain $(\vv^A,\p^{B'}) \in \mathbb{C}'$. There is an edge $\z \callarrow{i} \p$ and thus path $\vv \patharrow{\alpha(\Delta A B')} \p \in G_{\sc RI}$. 
Again, the second clause is more involved. Assume $\alpha(B') \rhd \p$ is \mutable. Then $\alpha(B) \rhd \z$ is \mutable as well. 
(If \p is \mutable, then by the rules of ReIm, \z is \mutable, and therefore, $\alpha(B) \rhd \z$ is mutable. 
Otherwise, \p is \poly and $\alpha(B')$ is \mutable. Since $\alpha(B') = \alpha(B) \rhd \x$, there are two cases, $\alpha(B)$ is \mutable 
and \x is \poly, or \x is \mutable.
In the first case, by the rules of ReIm we must have \z \poly, which yields $\alpha(B) \rhd \z = \mutable$. If \x is mutable, then 
by the rules of ReIm \z must be \mutable, which again yields $\alpha(B) \rhd \z = \mutable$.) Since we have established that 
$\alpha(B) \rhd \z$ is mutable, we have an inverse path $\z \patharrow{\alpha(\Delta B A)} \vv \in G_{\sc RI}$.
Since $\alpha(B') \rhd \p$ is \mutable, \p is either \mutable or \poly. In the case of \poly, we must have $\alpha(B') = \mutable$ which 
entails that \x is not \readonly, and therefore, $\x \rhd \p$ is not \readonly; thus, the analysis adds an inverse edge $\p \inverseretarrow{i} \z$.
Therefore, $\p \inverseretarrow{i} \z \patharrow{\alpha(\Delta B A)} \vv \Rightarrow \p \patharrow{\alpha(\Delta B' B)} \z \patharrow{\alpha(\Delta B A)} \vv \Rightarrow \p \patharrow{\alpha(\Delta B' A)} \vv$ (by Lemma~\ref{abstractDifferenceLemma}), which is the expected inverse path.

Case 3, return $i: \; \x = \y.\m(\z)$ is analogous to Case 2. 

Case 4. The most interesting case arises when the runtime semantics adds new chains at a field read. Let $\y = \x.\f$ in context $B$ and 
$\x'.\f = \y'$ in $A$ be such that \x and $\x'$ refer to $o$ and $\x'.\f = \y'$ is the most recent write to $o.\f$ preceding the read out of $\x.\f$. That is, this
is the last write that set $[o.\f \leftarrow ...]$ to form some $\mathbb{C}''$. By the inductive hypothesis,
for every chain $(\vv^D,(\y')^A)$, in this case in the earlier $\mathbb{C}''$, there is a path $\vv \patharrow{\alpha(\Delta D A)} \y' \in G_{\sc RI}$. By Lemma~\ref{lemma:chainLemma} there 
are chains $(\w^C,(\x')^A) \in \mathbb{C}''$ and $(\w^C,\x^B) \in \mathbb{C}$, where $\w = \mathit{new}\; o$ in context $C$ is the allocation site of $o$. 
By the inductive hypothesis, we have paths $\w \patharrow{\alpha(\Delta C A)} \x' \in G_{\sc RI}$, $\w \patharrow{\alpha(\Delta C B)} \x \in G_{\sc RI}$, as
well as an inverse path $\x' \patharrow{\alpha(\Delta A C)} \w \in G_{\sc RI}$ since $\x'$ is \mutable. Thus, we have a path
\[ \vv \patharrow{\alpha(\Delta D A)} \y' \stackrel{w_\f}{\rightarrow} \x' \patharrow{\alpha(\Delta A C)} \w \patharrow{\alpha(\Delta C B)} \x \stackrel{r_\f}{\rightarrow} \y 
\quad \Rightarrow \quad \vv \patharrow{\alpha(\Delta D B)} \y \in G_{\sc RI} \]

Now consider the second clause of the theorem. Suppose $\alpha(B) \rhd \y$ is \mutable; we need to show inverse path $\y \patharrow{\alpha(\Delta B D)} \vv \in G_{\sc RI}$. If $\alpha(B) \rhd \y$ is \mutable, then \y is \mutable or \poly and by the rules of ReIm \f is \poly and $\x'$ is \mutable. 
Therefore $\y'$ is \mutable and there is an inverse path 
$\y' \patharrow{\alpha(\Delta A D)} \vv \in G_{\sc RI}$. Next, if $\alpha(B) \rhd \y$ is \mutable, then $\alpha(B) \rhd \x$ is \mutable and by the inductive
hypothesis we have an inverse $\x \patharrow{\alpha(\Delta B C)} \w \in G_{\sc RI}$. \y being \mutable or \poly also implies that we have added an inverse edge
$\y \stackrel{w_\f}{\rightarrow} \x$ during construction of $G_{\sc RI}$. Adding all these paths along with the original $\w \patharrow{\alpha(\Delta C A)} \x'$ yields
\[ \y \stackrel{w_\f}{\rightarrow} \x \patharrow{\alpha(\Delta B C)} \w \patharrow{\alpha(\Delta C A)} \x' \stackrel{r_\f}{\rightarrow} \y'  \patharrow{\alpha(\Delta A D)} \vv 
\quad \Rightarrow \quad \y \patharrow{\alpha(\Delta B D)} \vv \in G_{\sc RI} \]

\end{proof}

Of course, even though we have shown that each chain has, roughly speaking, a corresponding path $p$ in the intersection of L(PG) and L(CR), computing the exact set of paths $p$ is undecidable. We define an approximate reachability analysis over $G_{\sc RI}$, {\sc CFL}, and we show that each path $p$ has a corresponding representative path in {\sc CFL}. 
In the following section we define type-based FlowCFL and interpret it in terms of CFL-reachability. In~\secref{sec:equivalence} we define the {\sc CFL} algorithm
and establish equivalence of FlowCFL and {\sc CFL} thus proving FlowCFL correct. 

\section{Type-based Analysis}
\label{sec:typeBasedSemantics}

This section presents the type-based FlowCFL outlining parallels with CFL-reachability as described in~\secref{sec:CFL}. 
The reader may wonder why one needs a type system, when one has a clear semantics 
in terms of standard CFL-reachability. First, type systems and type-based taint analysis have 
already been used in the literature~\cite{Sampson:2011PLDI, Huang:2014FASE, Huang:2015ISSTA, Shankar:2001USENIXSecurity}, 
in some cases without correctness proofs. 
CFL-reachability brings insight into type-based reachability/taint analysis and the theory of type qualifiers~\cite{Foster:2002PLDI}, and presents a novel framework 
for reasoning about correctness. 
Second, a type system allows programmers to specify requirements with type qualifiers, 
e.g., \high{} \x, and take advantage of systems such as the Checker Framework 
(\texttt{https://checkerframework.org/}) 
to statically check these requirements; such requirements cannot be easily expressed or checked using CFL-reachability. 
Third, type systems are modular, while CFL-reachability systems are typically whole-program analyses. 
A significant advantage of a type-based interpretation is that it allows for modular reasoning. We can infer type annotations
for libraries (e.g., as in~\cite{Huang:2012OOPSLA}), then type check a user program against annotated libraries while handling 
callbacks via standard function subtyping. 

\secref{sec:typeQualifiers} describes the type qualifiers in FlowCFL and~\secref{sec:typingRules} describes the typing rules. 
\secref{sec:equivalence} establishes equivalence of a certain CFL-reachability analysis and the type-based analysis.




\subsection{Type Qualifiers}
\label{sec:typeQualifiers}

FlowCFL makes use of the \high{} and \low{} type qualifiers that we introduced in~\secref{sec:overview}:
\begin{itemize}
\item \high{} --- a \high variable \x is a source or \x is such that a source flows to \x. A \high{} \x or any of its components cannot flow to a sink.
For example 
\[ \y = \x.\f \]
where \y is a sink, is not allowed. Similarly, 
\[ \code{y = id(x); z = y.f; } \]
where \z is a sink and {\sf id} is the identity function, is not allowed.   
\item \low{} --- a \low variable \x is a sink, or \x flows to a sink.  
\item \poly{} --- a \poly variable expresses polymorphism. In some contexts, \poly is interpreted as \high and in other contexts, it is interpreted as \low.
\end{itemize}

The subtyping hierarchy with \poly becomes
\[
\low <: \poly <: \high
\]
It is allowed to assign a \poly variable into a \high one, but not the other way around; similarly, it is allowed to assign a \low variable
into a \poly one, but not the other way around. Subtyping in FlowCFL models flow of values which is non-standard. The \poly value is interpreted as 
either \low or \high, depending on the stack context. It would be safe to assign a \low value into a \poly variable (which becomes 
either \low or \high) without causing flow from \high to \low. However, it would not be safe to assign a \high value into a \poly variable
because it may become \low, causing flow from a \high to a \low variable.  

We define the adaptation operation, analogously to the operation in~\secref{sec:referenceImmutability}. 
\[
\begin{array}{rllll}
  \_ &\rhd& \high &=& \high \\
  \_ &\rhd& \low &=&  \low \\
  q  &\rhd& \poly &=&  q \\
\end{array}
\]
Again, a \high or \low variable remains \high or \low. As in~\secref{sec:referenceImmutability} a \poly
variable takes the value of the \emph{adapter} (i.e., context of adaptation): if the adapter is \high, 
then \poly is interpreted as \high, and if the adapter is \low then \poly is interpreted as \low.

To avoid clutter we have used the same notation for the adaptation operator, $\rhd$, as in~\secref{sec:referenceImmutability}. 
From now on, we will use $\rhd$ to refer to the above definition (FlowCFL), and we will use $\rhdri$ to refer to the adaptation 
operator of reference immutability in the rare occasions it comes into play.

Adaptation adapts fields, formal parameters, and return values according to the context 
at the field access and method call. The type of a \poly field \code{f} takes the value of the receiver at the field access. 
The type of a \poly parameter or return is interpreted by
adapters at call site $i$. We elaborate on this shortly.


\subsection{Typing Rules}
\label{sec:typingRules}

The typing rules for program statements appear in~\figref{fig:types}.
The rules are defined in terms of a type environment $\Gamma$, which is standard.
$\Gamma = \langle \mathcal{C}, \sigma \rangle$, where $\mathcal{C}$ is a set of subtyping constraints
and $\sigma$ is a map from program variables to type qualifiers:
$\sigma\!: V \rightarrow \{ \high, \poly, \low \}$.
The premise of each rule in~\figref{fig:types} consists of two parts: one part adds constraints to $\mathcal{C}$, 
and the other part, ``$\mathcal{C}$ holds'', enforces those constraints. Concretely, 
``$\mathcal{C}$ holds'' requires (1) that $\mathcal{C}$ is closed under the rules in~\figref{fig:transitive}
and (2) that assignment of qualifiers to variables is such that all subtyping constraints in $\mathcal{C}$ hold.

Rule \rulename{tassign} adds constraint  $\y <:  \x$ to $\mathcal{C}$, which
forbids assignment of a \high or \poly reference to a \low one as well as assignment
of a \high reference to a \poly one. Again, we abuse notation 
by eliding qualifiers. Strictly, the above constraint should have been written as $q_\y <: q_\x$
where $q_\y = \Gamma(\y)$ and $q_\x = \Gamma(\x)$; rules are more compact while still clear.
If \x is not \readonly according to reference immutability, \rulename{tassign}
adds the \emph{inverse} constraint $\x <: \y$ as well. In other words, the expected subtyping constraint
turns into an equality constraint. This is a well-known issue, typically referred to as the problem of 
covariant arrays~\cite{Myers:1997POPL}, which stipulates that subtyping is unsafe in the presence of mutable 
references. The standard solution, adopted by the majority of systems, e.g., EnerJ~\cite{Sampson:2011PLDI}, is to 
impose equality constraints for \emph{all references}. This is akin to the bidirectional flow graph 
in~\secref{sec:CFL}. Our solution combines the ``bidirectional'' system  
with reference immutability to achieve limited subtyping and better precision.

One immediately notices the parallel with CFL-reachability. $\y <:  \x$ corresponds to forward edge 
$\y \arrow{d} \x$ in $G_{\sc RI}$ and the inverse constraint $\x <:  \y$ corresponds to the inverse edge $\x \arrow{d} \y$.
The same reasoning applies to all other explicit and implicit assignments: if the
left-hand-side is \readonly then the rule enforces a subtyping constraint, otherwise
it adds an inverse constraint, thus ensuing an equality just as in~\secref{sec:CFL}.

\begin{figure}[!t]
\begin{center}
\small
\begin{minipage}{11.35cm}
\begin{semantics}
\ntyperule{tassign}{
\Gamma \vdash \y <: \x \in \mathcal{C} \quad\quad \Gamma \vdash \x \mbox{ is not } {\readonly} \Rightarrow \x <: \y \in \mathcal{C} \quad\quad \Gamma \vdash \mathcal{C} \mbox{ holds } 
}{
\Gamma \vdash \x = \y
}
\end{semantics}
\end{minipage}

\begin{minipage}{12.5cm}
\begin{semantics}\ntyperule{twrite}{
    \Gamma \vdash \y <: \x \rhd \f \in \mathcal{C} \quad\quad
    \Gamma \vdash \x.\f \mbox{ is not } {\readonly} \Rightarrow \x \rhd \f <: \y \in \mathcal{C} \quad\quad \Gamma \vdash \mathcal{C} \mbox{ holds }
}{
  \Gamma \vdash \x.\f = \y
}
\end{semantics}
\end{minipage}

\begin{minipage}{12.15cm}
\begin{semantics}
  \ntyperule{tread}{
  \Gamma \vdash \y \rhd \f <: \x \in \mathcal{C} \quad\quad \Gamma \vdash \x \mbox{ is not } {\readonly} \Rightarrow \x <: \y \rhd \f \in \mathcal{C} \quad\quad \Gamma \vdash \mathcal{C} \mbox{ holds } 
}{
  \Gamma \vdash \x = \y.\f
}
\end{semantics}
\end{minipage}

\begin{minipage}{12cm}
\begin{semantics}\ntyperule{tcall}{
  \mathit{typeof}({\m}) = {\this}, {\p} \rightarrow {\ret} \\ 
\Gamma \vdash \y <: q_{\this}^i \rhd {\this} \in \mathcal{C} \quad\quad \Gamma \vdash \x \rhdri \this \mbox{ is not } {\readonly} \Rightarrow q_{\this}^i \rhd {\this} <: \y \in \mathcal{C} \\
\Gamma \vdash \z <: q_{\p}^i \rhd {\p} \in \mathcal{C} \quad\quad \Gamma \vdash \x \rhdri \p \mbox{ is not } {\readonly} \Rightarrow q_{\p}^i \rhd {\p} <: \z \in \mathcal{C} \\
\Gamma \vdash q_{\ret}^i \rhd {\ret} <: \x \in \mathcal{C} \quad\quad \Gamma \vdash \mbox{ is not } {\readonly} \Rightarrow  \x <: q_{\ret}^i \rhd {\ret} \in \mathcal{C} \\
\Gamma \vdash \mathcal{C} \mbox{ holds}
}{
  \Gamma \vdash i: \x = \y.\m(\z)
}
\end{semantics}
\end{minipage}
\end{center}
\caption{Typing rules associated to program statements.  $\Gamma \vdash A$ means $\mathcal{C},\sigma \vdash A$ as $\Gamma = \langle \mathcal{C}, \sigma\rangle$.
``$\mathcal{C}$ holds'' requires (1) that $\mathcal{C}$ is closed under 
the rules from~\figref{fig:transitive} and (2) that all constraints in $\mathcal{C}$ type check.}
\label{fig:types}
\end{figure}

Rules \rulename{twrite}, \rulename{tread} and \rulename{tcall} use adaptation.
At field accesses \y.\f, field \f is interpreted in the context of receiver \y. If $\f$ is \high
(or \low in the positive setting), then its adapted value remains \high
(or \low). If \f is \poly, then the adapted value assumes the type of $\y$. Notably, FlowCFL restricts
the type of \f to $\{ \high, \poly \}$ in the negative setting, and to $\{ \poly, \low \}$
in the positive setting. In our discussion going forward, we assume the negative setting (as described in~\secref{sec:overview}),
however all reasoning applies to the symmetric positive setting as well. 
We can allow \low fields in FlowCFL. However, we are interested in type inference, 
and allowing \low fields would create ambiguity: if $\x.\f$ flows to 
\low, (1) do we infer that field \f is \low, and is ``special'', i.e., it is excluded from the state of a potentially \high{} \x, 
or (2) do we infer that \f is just a ``regular'' field, and a negative $\x.\f$ entails a \low{} \x? 
Restricting fields to $\{ \high, \poly \}$ chooses the latter, as there is no way to know,
without programmer annotations, which fields are ``special''. Inference tools such as 
Javarifier~\cite{Quinonez:2008ECOOP} and ReImInfer~\cite{Huang:2012OOPSLA} make the same choice. 
The restriction states that a positive reference \x cannot have negative components. 

Rules \rulename{twrite} and \rulename{tread} handle heap-transmitted dependences. 
Consider 
\[\code{x.f = a;} \quad \code{y = x;} \quad \textsf{\textbf{neg}} \; \code{b = y.f}\]

Rules \rulename{twrite}, \rulename{assign}, and \rulename{tread} create constraints 
\[{\sf a} <: \x \rhd \f \quad \x <: \y \quad \y \rhd \f <: {\sf b} \]
Since ${\sf b} = \low$, $\f$ is \poly and we have
\[{\sf a} <: \x \quad \x <: \y \quad \y <: {\sf b} \]
which forces ${\sf a} = \low$, as needed. 
Notice again the parallel with CFL-reachability. Constraints 
\[{\sf a} <: \x \rhd \f \quad \x <: \y \quad \y \rhd \f <: {\sf b}\]
correspond to the path in $G_{\sc RI}$
\[ {\sf a} \arrow{$w_\f$} \x \arrow{d} \y \arrow{$r_\f$} {\sf \underline{b}} \]
and the ``linear'' constraints 
\[{\sf a} <: \x \quad \x <: \y \quad \y <: {\sf b}\] 
correspond to dropping the $w_\f$ and $r_\f$ annotations, thus achieving a CSFI approximation.
FlowCFL is a more precise variant of CSFI as we argue in~\secref{sec:equivalence}.

Rule \rulename{tcall} captures call-transmitted dependences and is the most
involved. Unlike previous systems, e.g., EnerJ and DroidInfer, FlowCFL allows for \emph{distinct} adapters. 
Every parameter/return has a distinct associated adapter $q_\this^i$, 
$q_{\p}^i$ and $q_{\ret}^i$ instead of a single per-call-site adapter $q^i$. This is
necessary to achieve the CFL-reachability semantics.
We elaborate on this shortly.

\begin{figure}[!t]
\begin{center}
\small
\begin{minipage}{8.25cm}
\begin{semantics}
    \ntyperule{erase-left}{
  \Gamma \vdash \x \rhd \f <: \y \in \mathcal{C} \quad\quad \Gamma \vdash \sigma(\f) = \poly
}{
  \Gamma \vdash \x <: \y \in \mathcal{C}
}
\end{semantics}
\end{minipage}

\begin{minipage}{8.25cm}
\begin{semantics}
    \ntyperule{erase-right}{
  \Gamma \vdash \y  <: \x \rhd \f \in \mathcal{C} \quad\quad \Gamma \vdash \sigma(\f) = \poly
}{
  \Gamma \vdash \y <: \x \in \mathcal{C}
}
\end{semantics}
\end{minipage}

\begin{minipage}{7.5cm}
\begin{semantics}
    \ntyperule{trans-local}{
  \Gamma \vdash \x  <: \y \in \mathcal{C} \quad\quad \Gamma \vdash \y  <: \z \in \mathcal{C}
}{
  \Gamma \vdash \x <: \z \in \mathcal{C}
}
\end{semantics}
\end{minipage}

\begin{minipage}{12.25cm}
\begin{semantics}
    \ntyperule{trans-call}{
  \Gamma \vdash \z  <: q^i_{\p} \rhd \p \in \mathcal{C} \quad\quad \Gamma \vdash q^i_{\ret} \rhd \ret  <: \x \in \mathcal{C} \quad \quad \Gamma \vdash \p <: \ret \in \mathcal{C}
}{
  \Gamma \vdash \z <: \x \in \mathcal{C}
}
\end{semantics}
\end{minipage}


\end{center}
\caption{Constraint propagation. \p and \ret in \rulename{trans-call} can be any of \this, \p, or \ret.} 
\label{fig:transitive}
\end{figure}

Before we delve into \rulename{tcall}, we consider the rules in~\figref{fig:transitive}.
The rules explicitly collect transitive intraprocedural constraints into $\mathcal{C}$; they 
capture constraints that correspond to call/ret balanced paths (i.e., $M$-paths). 
\rulename{erase-left} and \rulename{erase-right}
``linearize'' constraints---e.g., when $\f$ is \poly, 
$\y \rhd \f <: {\sf b}$ becomes $\y <: {\sf b}$. This corresponds to dropping field $w,r$ annotations, 
thus achieving a variant of CSFI. Notably, a
constraint is linearized only if the corresponding field is \poly, 
which happens only when the field is on a path to a sink. 

Rule \rulename{trans-call} in~\figref{fig:transitive} transfers
constraints from the callee to the caller. If there is flow from a parameter
\p to return \ret, captured by subtyping constraint $\p <: \ret \in \mathcal{C}$, then
there is flow from actual argument \z to the left-hand-side of the call assignment
\x. 

Consider the example below, which is similar to~\figref{fig:runningExample}:
\begin{center}
\begin{tabular}{ll}
\begin{minipage}{6.5cm}
\begin{lstlisting}
class A { 
  %\cpoly% B f;
  void set(%\cpoly% A this, %\cpoly% B p) {
    this.f = p;
  }
  %\cpoly% B get(%\cpoly% A this) {
    ret = this.f;
    return ret;
  }
}      
\end{lstlisting}
\end{minipage}
\begin{minipage}{6.5cm}
\begin{lstlisting} 
main() {
  %\clow% A e = new A; % \hspace{2mm}\small\it \fbox{o}%
  B a = new B;
  e.set(a); // %$q_{\this}^4 = q_{\p}^4 = \clow$%
  %\clow% A g = e;
  %\clow% B b = g.get(); // %$q_{\this}^6 = q_{\ret}^6 = \clow$%
}  
\end{lstlisting}
\end{minipage} 
\end{tabular}
\end{center}

Class {\sf A} is polymorphic and {\sf main} uses {\sf A} in a negative context.
As {\sf b} is a sink, it follows that {\sf a} flows to a sink and the types should properly reflect the flow.
Line 4 in {\sf A.set} results in constraint ${\sf p} <: \this \rhd \f$.
Since $\f$ is \poly, \rulename{erase-right} in~\figref{fig:transitive} produces
${\sf p} <: \this$. 
Call site 4 in {\sf main} entails 
\[{\sf a} <: q_\p^4 \rhd \p \quad {\sf e} <: q_\this^4 \rhd \this \quad q_\this^4 \rhd \this <: {\sf e}\] 
The last constraint is the inverse of the previous one due to the mutation of \this. 
Rule \rulename{trans-call} in~\figref{fig:transitive} combines constraints
\[{\sf a} <: q_\p^4 \rhd \p \quad \p <: \this \quad q_\this^4 \rhd \this <: {\sf e}\]
to get ${\sf a} <: {\sf e}$. 
Analogously, constraint $\this \rhd \f <: \ret$
in {\sf get} and call site 6 in {\sf main} yield ${\sf g} <: {\sf b}$. 
Constraints ${\sf a} <: {\sf e}$, ${\sf e} <: {\sf g}$ (due to \code{g = e}) and 
${\sf g} <: {\sf b}$ capture the flow from {\sf a} to {\sf b}.


\subsection{FlowCFL$^-$}

Why not use the following simpler \rulename{tcall}?
\begin{center}
\begin{minipage}{10.5cm}
\begin{semantics}\ntyperule{tcall}{
  \mathit{typeof}({\m}) = {\this}, {\p} \rightarrow {\ret}  \\ 
\Gamma \vdash \y <: q^i \rhd {\this} \quad\quad \Gamma \vdash \x \rhdri \this \mbox{ is not } {\readonly} \Rightarrow q^i \rhd {\this} <: \y  \\
\Gamma \vdash \z <: q^i \rhd {\p} \quad\quad \Gamma \vdash \x \rhdri \p \mbox{ is not } {\readonly} \Rightarrow q^i \rhd {\p} <: \z  \\
\Gamma \vdash q^i \rhd {\ret} <: \x  \quad\quad \Gamma \vdash \x \mbox{ is not } {\readonly} \Rightarrow \x <: q^i \rhd {\ret} 
}{
  \Gamma \vdash i: \x = \y.\m(\z)
}
\end{semantics}
\end{minipage}
\end{center}
The rule makes use of a single viewpoint adapter $q^i$ rendering $\mathcal{C}$ and the rules in~\figref{fig:transitive}
unnecessary! Replacing \rulename{tcall} in~\figref{fig:types} with the above \rulename{tcall} yields a new type system, 
which we call FlowCFL$^-$ (FlowCFL minus). An advantage of FlowCFL$^-$ is its simplicity; it is also sound, however, 
it rejects programs that CFL-reachability over $G_{\sc RI}$ handles precisely.

The following (somewhat contrived) example illustrates the imprecision of FlowCFL$^-$ and the need for multiple adapters:
\begin{center}
\begin{tabular}{ll}
\begin{minipage}{6.85cm}
\begin{lstlisting}
%\cpoly% Y m(%\cpoly% A this, %\cpoly% X p, %\cpoly% Y q) {
  this.f = p;
  ret = q;
}
...
A a, X x, Y y;
Y y2 = a.m(x,y);
%\clow% X x2 = a.f;

A a1, X x1, Y y1;
%\clow% Y y3 = a1.m(x1,y1);
\end{lstlisting}
\end{minipage}

& 

\begin{minipage}{6cm}
\begin{tikzcd}[cramped]
{\sf \underline{x2}}  & {\sf a} \arrow[l,"{r_\f}" description]  \arrow[dr,"{(_7}" description]  & & \y \arrow[dr,"{(_7}" description] & & \\
& {\sf a1} \arrow[r,"{(_{11}}" description]  & \this \arrow[ul, dashed, bend right, "{)_7}" description] \arrow[l, dashed, bend left, "{)_{11}}" description] & {\sf y1} \arrow[r,"{(_{11}}" description]  & {\sf q} \arrow[d] \\
& {\sf x} \arrow[r,"{(_7}" description] & \p \arrow[u, "{w_\f}" description] & {\sf y2}  &  \ret \arrow[l,"{)_7}" description] \arrow[dl,"{)_{11}}" description]  & \\
& {\sf x1} \arrow[ur,"{(_{11}}" description] & & {\sf \underline{y3}} & & 
\end{tikzcd}
\end{minipage}




\end{tabular}
\end{center}

There are two disjoint paths through \m: (1) ${\sf p} \patharrow{} \this$, 
and (2) ${\sf q} \patharrow{} \ret$. At call 7 the first path appears in negative context (as subpath of the path from \x to \low{} {\sf x2}), 
while the second path appears in positive context (as subpath of the path from {\sf y} to {\sf y2}). At call 11 the opposite happens:
the first path appears in positive context and the second one in negative context.
CFL-reachability precisely propagates the negative qualifier \low{} {\sf x2} back to {\sf x} and \low{} {\sf y3} back to {\sf y1}. 
FlowCFL propagates the negative qualifiers in exactly the 
same way. It discovers paths $\x \patharrow{} {\sf x2}$ and ${\sf y1} \patharrow{} {\sf y3}$ via $\mathcal{C}$: 
it adds $\x <: {\sf x2}$ and ${\sf y1} <: {\sf y3}$ to $\mathcal{C}$ based on $\p <: \this$ and ${\sf q} <: \ret$ respectively; it adds no
spurious constraints (i.e., paths).

In contrast, with a single adapter $q^i$
(e.g., as in DroidInfer and EnerJ) the above precise typing is impossible. 
This is because the role of the adapter is twofold: (1) to interpret the \poly parameter/return in the corresponding context, 
and (2) to propagate paths from callee to caller. 
Given sink \low{} {\sf X x2} in line 8, field \f is \poly (due to the flow of \f to sink {\sf x2}).
This forces \this and \p of \m to \poly, as shown in the typing of \m in lines 1-4. 
Sink \low{} {\sf Y y2} in line 11 forces \ret and {\sf q }to \poly as well. 
Due to $q^7 \rhd \this <: {\sf a}$ and $q^{11} \rhd \ret <: {\sf y3}$ respectively, 
we have $q^{7} = \low$ and $q^{11} = \low$. Thus, $\y <: q^7 \rhd {\sf q}$ and ${\sf a1} <: q^{11} \rhd {\sf this}$ unnecessarily 
force \y and {\sf a1} to \low.


Multiple adapters differentiate between flow paths. This is because the purpose of the adapters is 
only to interpret the \poly parameter/return in the corresponding context; propagation of
paths from calee to caller is done via $\mathcal{C}$. 
In our example we have ${\sf a1} <: q_\this^{11} \rhd \this$, ${\sf x1} <: q_\p^{11} \rhd \p$, ${\sf y1} <: q_{\sf q}^{11} \rhd {\sf q}$,
and $q_\ret^{11} \rhd \ret <: {\sf y3}$. $q_{\this}^{11} = q_{\p}^{11} = \high$, 
and $q_{\ret}^{11} = q_{\sf q}^{11} = \low$. The qualifiers are flipped at call site 7:
$q_{\this}^{7} = q_{\p}^{7} = \low$, and $q_{\ret}^{7} = q_{\sf q}^7 = \high$.

\section{Equivalence of CFL-reachability and Type-based Analyses}
\label{sec:equivalence}


Recall that CFL-reachability over both $w,r$ and call/ret annotations is undecidable. Typical approximations are CSFI (context-sensitive, field-insensitive)
and CIFS, and variants in-between. FlowCFL captures a variant of CSFI, which we call CSFI$^+$. 
As mentioned earlier, CSFI replaces all field annotations with {\sf d} and performs CR-reachability. E.g., in 
\[ {\sf x.f = a;} \;\; \textsf{\textbf{neg}} \; {\sf b = x.g; } \quad \Rightarrow \quad {\sf a} \arrow{d} \x \arrow{d} {\sf \underline{b}} \] 
CSFI replaces $w_\f$ and $r_\g$ with ${\sf d}$ and spuriously propagates \low ${\sf b}$
back to ${\sf a}$. Another way to look at CSFI is as if we replaced production $B \rightarrow w_\f \; B \; r_\f$ 
in the PG grammar in~\figref{fig:cfgs}(b) with $B \rightarrow w_\f \; B \; r_\g$.

Like CSFI, CSFI$^+$ does match certain distinct field annotations $w_\f$ and $r_\g$, but not all. 
CSFI+ matches $w_\f$ and $r_\g$ only if fields $\f$ and $\g$ both flow to sinks. 
As an example of potential imprecision in CSFI$^+$ consider the two snippets of the same program:

\begin{center}
\begin{tabular}{lllllll}
\begin{minipage}{2.5cm}
\begin{lstlisting}
x.f = a0;
x.g = b0;
%\clow% c0 = x.f;
d0 = x.g;
\end{lstlisting}
\end{minipage}
& 
$\Rightarrow$
&
\begin{tikzcd}[cramped]
{\sf a0} \arrow[r, "{w_\f}" description] & \x \arrow[r, "{r_\f}" description]  \arrow[dr, "{r_\g}" description]  & {\sf \underline{c0}} \\
{\sf b0} \arrow[ur, "{w_\g}" description] & & {\sf d0} \\
\end{tikzcd}
& 
\quad
& 
\begin{minipage}{2.6cm}
\begin{lstlisting}
y.f = a1;
y.g = b1;
c1 = y.f;
%\clow% d1 = y.g;
\end{lstlisting}
\end{minipage}
& 
$\Rightarrow$
&
\begin{tikzcd}[cramped]
{\sf a1} \arrow[r, "{w_\f}" description] & \y \arrow[r, "{r_\f}" description]  \arrow[dr, "{r_\g}" description]  & {\sf c1} \\
{\sf b1} \arrow[ur, "{w_\g}" description] & & {\sf \underline{d1}} \\
\end{tikzcd}

\end{tabular}
\end{center}
Since both $\f$ and $\g$ flow to sinks, CSFI$^+$ matches the distinct field
annotations. It propagates negative {\sf c0} to both {\sf a0} and {\sf b0}. 
Similarly, it propagates negative {\sf d1} to both {\sf a1} and {\sf b1}. 

\begin{figure}[!t]

\begin{tabular}{ll}
\begin{minipage}{0.55\linewidth}
\algrenewcommand\algorithmicindent{0.5em}

\small{
\begin{algorithmic}[1]
\Procedure{Cfl}{}
\State $P = \emptyset, F = \emptyset$
\State Add $n \patharrow{M} n$ to $P$ for all sinks $n$
\While {$P$ or $F$ changes}
\For {{\bf each} $s$ in Program} 
\State{{\sc {Edge}}{($\mathit{forward}(s)$)}} 
\State{{\sc {Edge}}{($\mathit{inverse}(s)$)}} 
\EndFor
\EndWhile
\EndProcedure
\end{algorithmic}

\vspace{0.1in}
\begin{algorithmic}[1]
 \Procedure{Edge}{$\x \arrow{t} \y$} // $\x \arrow{t} \y \in G_{\sc RI}$\\
 \hspace{0.05cm} {\bf if} $\y \in \{\ret,\this,\p\}$ {\bf then} Add $\y \patharrow{M} \y$ to $P$
\For{{\bf each} $\y \patharrow{N} n \in P$}
\State case {\sf t}, $N$ of 
\State \hspace{0.05cm} {\sf d}, \_ \; -> \; Add $\x \patharrow{N} n$, $\x \patharrow{M} \y$ to $P$ 
\State \hspace{0.05cm} $r_\f$, \_ \; -> \; Add $\x \patharrow{N} n$, $\x \patharrow{M} \y$ to $P$, Add \f to $F$
\State \hspace{0.05cm} $w_\f$, \_ \; -> \; {\bf if} {$\f \in F$} {\bf then} Add $\x \patharrow{N} n$, $\x \patharrow{M} \y$ to $P$
\State \hspace{0.05cm} $)_i$, \_ \; -> \; Add $\x \patharrow{R} n$ to $P$
\State \hspace{0.05cm} $(_i$, $M/C$ \; -> \; Add $\x \patharrow{C} n$ to $P$
\State \hspace{0.05cm} $(_i$, $R$ \; -> \; {\bf for each} {$\y \patharrow{M} \ret \retarrow{i} \z \patharrow{N'} n \in P$} {\bf do} 
\State \hspace{2.00cm} Add $\x \patharrow{N'} n$, $\x \patharrow{M} \z$ to $P$
\State \hspace{1.75cm} {\bf end for}
\EndFor
\State {\bf for each} $\x \patharrow{M} \y \in P$ {\bf do}  
\State \hspace{0.05cm} {\bf for each} $\y \patharrow{M} \ret \in P$ {\bf do}
\State \hspace{0.2cm} Add $\x \patharrow{M} \ret$ to $P$ 
\State \hspace{0.05cm} {\bf end for}
\State {\bf end for}

\EndProcedure
\end{algorithmic}
}
\end{minipage}

&

\begin{minipage}{0.5\linewidth}
\algrenewcommand\algorithmicindent{0.5em}

\begin{algorithmic}[1]
\Procedure{Types}{} 
\State $S(n) = \{ \low \}$ for all sinks $n$
\State $S(n) = \{ \low, \poly, \high \}$
\While {$S$ or $\mathcal{C}$ changes}
\For {{\bf each} $s$ in Program} 
\State{{\sc {Constraint}}($\mathit{forward}(s)$)}
\State{{\sc {Constraint}}($\mathit{inverse}(s)$)}
\EndFor
\EndWhile
\EndProcedure
\end{algorithmic}

\vspace{0.2in}
\begin{algorithmic}[1]
\Procedure{Constraint}{$c$} 
\State {\sc {Solve}}($c$)
\State case $c$ of
\State \hspace{0.05cm} $\x <: \y$ \; -> \; Add $\x  <: \y$ to $\mathcal{C}$
\State \hspace{0.05cm} $\x \rhd {\poly} <: \y$ \; -> \; Add $\x  <: \y$ to $\mathcal{C}$
\State \hspace{0.05cm} $\x <: \y \rhd \poly$ \; -> \; Add $\x  <: \y$ to $\mathcal{C}$
\State \hspace{0.05cm} $q_{\ret}^i \rhd \ret <: \x $ \; -> \; -
\State \hspace{0.05cm} $\x <: q_{\p}^i \rhd \p$ \; -> \;  
\State \hspace{0.2cm} {\bf for each} $\p <: \ret \in \mathcal{C}$,
\State \hspace{1.5cm} $q_{\ret}^i \rhd \ret <: \z \in \mathcal{C}$ {\bf do}
\State \hspace{0.5cm} Add $\x  <: \z$ to $\mathcal{C}$ 
\State \hspace{0.5cm} {\sc {Solve}}($\x  <: \z$)
\State \hspace{0.2cm} {\bf end for}
\State {\bf for each} $\x <: \y \in \mathcal{C}$ {\bf do}  
\State \hspace{0.05cm} {\bf for each} $\y <: \ret \in \mathcal{C}$ {\bf do}
\State \hspace{0.2cm} Add $\x  <: \ret$ to $\mathcal{C}$ 
\State \hspace{0.2cm} {\sc {Solve}}($\x  <: \ret$)
\State \hspace{0.05cm} {\bf end for}
\State {\bf end for}
\EndProcedure
\end{algorithmic}
\vspace{0.1in}

\end{minipage}

\end{tabular}

  \vspace{-0mm}
  \caption{
  Algorithms {\sc Cfl} and {\sc {Types}} assume a set of user-defined sinks. {\sc {Cfl}} computes $P$, which collects all paths $\x \patharrow{N} n$ from variables to sinks. 
  It iterates over program statements $s$ processing the edges $\x \arrow{t} \y \in G_{\sc RI}$ 
  and adding paths to $P$ by concatenating the ``terminal'' annotation {\texttt t} and 
  the ``nonterminal'' annotation $N$ according to the rules of the CR context-free grammar in~\figref{fig:cfgs}. 
  {\sc {Types}} initializes $S$, then iterates over program statements $s$ 
  removing qualifies from $S$ and collecting constraints in $\mathcal{C}$.  
  The algorithms elide details to highlight the ``parallel'' structure of the two systems. 
}
  \label{fig:algorithms}
\end{figure}

The problem is to find a set of paths that includes \emph{all} properly matched $w/r$ and call/ret paths to sinks in $G_{\sc RI}$.
Without loss of generality we assume that sinks are primitive types, i.e., the PG-component of every path from $v$ to a sink is either a $G$-path 
or a $B$-path.
CSFI$^+$ back-propagates sinks maintaining a set $F$ of fields that flow to sinks. The difference between precise 
propagation, CSFI$^+$, and CSFI lies in production $B \rightarrow w_\f \; B \; r_\g$. Precise propagation infers a 
$B$-path when \f = \g (as in~\figref{fig:cfgs}(b)), CSFI$^+$ infers a $B$-path only when $\{\f, \g\} \in F$, and CSFI infers a $B$-path in all cases.


\figref{fig:algorithms} presents two equivalent implementations of CSFI$^+$. Both algorithms implement FlowCFL in the negative setting---they 
start from a set of sinks and back-propagate those sinks via CSFI$^+$-reachability. Algorithm {\sc Cfl} collects all paths from variables to sinks
in $P$, as well as all balanced subpaths of these paths ($M$-paths).  
One can easily show (by induction on the length of the path) that {\sc Cfl} captures in $P$ all properly matched paths in $G_{\sc RI}$. 
Algorithm {\sc Types} makes use of the type system in~\secref{sec:typeBasedSemantics}. It computes a map $S$ from program 
variables to sets of qualifiers. $S$ is initialized as follows: $S(u) = \{ \low \}$ for each sink $u$, $S(\x) = \{ \high, \poly, \low \}$ 
for each variables \x, and $S(\f) = \{ \high, \poly \}$ for each field \f. 
{\sc Types} iterates through program statements; it infers new ``linear'' constraints and removes infeasible 
qualifiers from variable sets until it reaches a fixed point. Function {\sc Solve} takes a constraint, e.g., $\x <: \y$ and updates $S(\x)$. E.g., if $S(\y) = \{ \low \}$
and $S(\x) = \{ \high, \poly, \low \}$, {\sc Solve} removes \high{} and \poly{} from $S(\x)$ because neither is a subtype of \low.
As another example, consider constraint $\x \rhd \f <: \y$ where $S(\y) = \{ \poly, \low \}$, $S(\x) = \{ \high, \poly, \low \}$, and 
$S(\f) = \{ \high, \poly \}$. {\sc Solve} removes \high from both $S(\x)$ and $S(\f)$ because the constraint cannot be 
satisfied if either \x or \f is \high. Such fixpoint iteration has been used in previous work~\cite{Kiezun:2007ICSE,Tip:2011TOPLAS, Huang:2012ECOOP}.

{\sc Types} assigns sets of qualifiers to variables. To assign a final typing to a variable/field, we pick the \emph{maximal qualifier} according
to preference ranking \high > \poly > \low. One can see through case by case analysis that the \emph{maximal typing} type checks with the 
rules in~\figref{fig:types} and~\figref{fig:transitive}. Qualifiers $q^i_\this$, $q^i_\p$, $q^i_\ret$ can take any value that satisfies the maximal typing.

We argue correctness of our type-based analysis by establishing equivalence between {\sc Cfl} and {\sc Types}. 
\defref{def:soundness} states that if there is an $M/C$-path or an $R$-path from $\x$ to a sink $n$, then the maximal type of \x in $S$ 
is at least, respectively, \low{} or \poly. For example, if there is an $R$-path, the maximal typing is \poly or \low.

\begin{definition} \label{def:soundness} (Soundness) $P \Rightarrow S$ if and only if
\[
\begin{array}{llll} 
1. & \x \patharrow{M/C} n \in P & \Rightarrow & \mathit{max}(S(\x)) <: \low \\
2. & \x \patharrow{R} n \in P & \Rightarrow & \mathit{max}(S(\x)) <: \poly \\ 
\end{array}
\]
\end{definition}

\defref{def:precision} states that $\x$'s maximal type in $S$ implies a corresponding path in $P$. 
For example, maximal typing \poly means that there is a $R$-path but there is no $M/C$-path.

\begin{definition} \label{def:precision} (Precision) $S \Rightarrow P$ if and only if
\[
\begin{array}{llll} 
1. & \mathit{max}(S(\x)) = \low & \Rightarrow & \exists \; \x \patharrow{M/C} n \in P \\
2. & \mathit{max}(S(\x)) = \poly & \Rightarrow & \exists \; \x \patharrow{R} n \in P \; \wedge \; \nexists \; \x \patharrow{M/C} n \in P \\
3. & \mathit{max}(S(\x)) = \high & \Rightarrow & \mbox{no path from \x to any $n$ in $P$} \\
\end{array}
\]
\end{definition}

\begin{definition} \label{def:equivalence} (Equivalence) $P \simeq S$ 
if and only if $P \Rightarrow S$ and $S \Rightarrow P$.
\end{definition}


Let the Hoare triple denote parallel execution of {\sc {Edge}} and {\sc {Constraint}} on statement $s$:
\[\{ P, S \} \;\; \mbox{{\sc {Edge}}}(e(s)) \; || \; \mbox{{\sc {Constraint}}}(c(s)) \;\; \{ P', S' \} \]
The equivalence result comes from the following theorem:

\begin{theorem} If $\;\; P \simeq S \;\;$ and 
$\;\; \{P,S\}  \; \mbox{{\sc {Edge}}}(e(s)) \; || \; \mbox{{\sc {Constraint}}}(c(s)) \; \{ P', S' \} \;\;$ then $P' \simeq S'$.
\end{theorem}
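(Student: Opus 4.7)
The plan is to proceed by structural induction on the form of statement $s$, treating each statement kind (assignment, field write, field read, call/return) as a separate case. For every case I will examine the forward and inverse edges added by $e(s)$ to $G_{\sc RI}$ alongside the constraints added by $c(s)$, and verify both directions of~\defref{def:equivalence}, reusing the inductive hypothesis $P \simeq S$ for all preexisting paths and qualifier sets.

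The backbone of the argument is a syntactic correspondence between the two algorithms in~\figref{fig:algorithms}: each $G_{\sc RI}$ edge in the ${\sc Edge}$ routine is matched by a constraint in the ${\sc Constraint}$ routine with the same polarity, and the case-dispatch on path-label (${\sf d}$, $r_\f$, $w_\f$, $(_i$, $)_i$) corresponds directly to the constraint shapes handled by $\rulename{trans-local}$, $\rulename{erase-left}$, $\rulename{erase-right}$, and $\rulename{trans-call}$ in~\figref{fig:transitive}. I will exhibit this correspondence as an invariant maintained through the parallel execution, so that a new path $\x \patharrow{N} n$ added to $P'$ is always mirrored by a new constraint in $\mathcal{C}$ whose ${\sc Solve}$ trims $S'(\x)$ to match clause $N$ of~\defref{def:soundness}, and symmetrically for~\defref{def:precision}.

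For the heap cases I will rely on the following bridge: the field set $F$ maintained by ${\sc Cfl}$ contains exactly those fields $\f$ for which ${\sc Types}$ has pruned $S(\f)$ down to $\{\poly\}$. This is because adding \f to $F$ is triggered in ${\sc Edge}$ by traversing an $r_\f$ edge on a path to a sink, which corresponds in ${\sc Constraint}$ to $\y \rhd \f <: n$ with $n$ a sink, and by ${\sc Solve}$ forces $\f$ to be \poly (fields cannot be \low). Matching $w_\f$ against $r_\f$ in the $B \rightarrow w_\f \; B \; r_\f$ production of ${\sc Cfl}$ then happens precisely when $\rulename{erase-left}$/$\rulename{erase-right}$ linearize the corresponding write constraint in ${\sc Types}$, so the $B$-subpaths in $P'$ and the linearized constraints in $\mathcal{C}$ are introduced in lockstep. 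The assignment and field-write/read cases then reduce to straightforward checks.

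The main obstacle will be the call/return case, because of the three distinct adapters $q^i_\this$, $q^i_\p$, $q^i_\ret$ and the inverse edges guarded by the reference-immutability condition $\x \rhdri \this \neq \readonly$ (analogously for $\p$ and $\ret$). Here I must show that a new $(_i$-$)_i$-matched subpath in $P'$, produced by the $(_i,R$-case of ${\sc Edge}$ composing $\y \patharrow{M} \ret \retarrow{i} \z$ with the incoming path, is mirrored by the transitive $\x <: \z$ constraint synthesised by $\rulename{trans-call}$ in ${\sc Constraint}$, and conversely that any qualifier removal at a call site arises from such a matched path. Adapter sets $S(q^i_\this)$, $S(q^i_\p)$, $S(q^i_\ret)$ are not fixed to single qualifiers by the fixpoint, so for the precision direction I will invoke the adapter-existence argument of~\secref{sec:typeBasedSemantics} to construct concrete adapter witnesses whose consistency matches the call/return brackets of the corresponding $G_{\sc RI}$ path. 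Once these adapter witnesses are in hand, the remaining verification is routine and closes the induction, yielding $P' \simeq S'$.
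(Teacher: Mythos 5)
Your plan is essentially the same approach as the paper's: the paper's entire proof is a one-line appeal to ``case-by-case analysis'' over program statements (following Milanova, ECOOP 2018), which is exactly the statement-by-statement comparison of {\sc Edge} and {\sc Constraint} you lay out. Your proposal actually supplies more detail than the paper does --- in particular the invariant that $F$ coincides with the fields pruned to $\{\poly\}$ and the adapter-witness construction for the call case --- but it is the same decomposition, not a different route.
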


\begin{proof} The proof is carried out by case-by-case analysis as in~\citep{Milanova:2018ECOOP}.\end{proof}



\section{Related Work}
\label{sec:relatedWork}

CFL-reachability dates decades back~\cite{Reps:1995POPL,Reps:2000TOPLAS}, yet it remains 
highly relevant. Zhang and Su~\cite{Zhang:2017POPL}, Spath et al.~\cite{Spath:2019POPL}, and 
Chatterjee et al.~\cite{Chatterjee:2018POPL}, among other works, present novel CFL-reachability 
approximations and algorithms with application to data dependence. Xu et al.~\cite{Xu:2009ECOOP}, 
and Lu and Xue~\cite{Lu:2019OOPSLA}, again among other works, present novel CFL-reachability-based 
points-to analyses. In all works, the concept of the inverse edge, introduced
by Sridharan et al.~\cite{Sridharan:2005OOPSLA}, factors in. Our work presents a formal treatment of 
the inverse edges and paths and a correctness argument for CFL-reachability over graphs with inverse edges.
Recent work by Li et al.~\cite{Li:2020PLDI}
presents a graph simplification algorithm for CFL-reachability that removes certain edges that do not contribute
to paths to sinks. This work nicely complements our work, as it can be applied on any CFL-reachability
graph, including $G_{\sc BI}$ and $G_{\sc RI}$; Li et al., demonstrate their technique using DroidInfer's 
graphs~\cite{Huang:2015ISSTA}, which are $G_{\sc RI}$ graphs. (We use DroidInfer's graphs in our experiments as well.) 
We have focused on understanding the 
dynamic semantics of flows, establishing soundness of the removal of certain inverse edges, and drawing 
a connection between CFL-reachability and type-based flow analysis. 

Type-based analysis has a long history as well~\cite{Palsberg:2001PASTE} and our analysis
falls into this line of work. Classical work on type-based taint (information flow) analysis includes 
work by Shankar et~al.~\cite{Shankar:2001USENIXSecurity}, Volpano et al.~\cite{Volpano:1996}, 
and Myers~\cite{Myers:1999POPL}. 

Few works have explored the connection between CFL-reachability
and type-based analysis. Milanova~\cite{Milanova:2018ECOOP} presents an interpretation of 
reference immutability in terms of CFL-reachability. We make use of this interpretation (\secref{sec:referenceImmutability}), 
however, we address a different and more difficult problem, as the nature of approximation in 
reference immutability~\cite{Tschantz:2005OOPSLA,Huang:2012OOPSLA,Milanova:2018ECOOP} renders inverse edges 
unnecessary and reachability analysis much simpler. 

Rehof and Fahndrich~\cite{Rehof:2001POPL} connect type-based flow analysis and 
CFL-reachability. 
However, Rehof and Fahndrich do not discuss  
mutable references and it is unclear how their analysis and interpretation, targeting a pure functional language, 
can handle mutable data and heap-transmitted dependences.  
On the other hand, Rehof and Fahndrich handle higher-order functions while we do not. 
An important direction of future work is extending our approach with handling of higher-order functions
which will enable application of the FlowCFL framework to the analysis of dynamic languages. 
Fahndrich et al.~\cite{Fahndrich:2000PLDI} apply the theory of~\cite{Rehof:2001POPL} 
to build a context-sensitive Steensgard-style points-to analysis for C, thus 
using equality constraints instead of subtyping constraints. As mentioned earlier, equality constraints is the standard 
approach to the handling of mutable references~\cite{Shankar:2001USENIXSecurity,Sampson:2011PLDI,Fuhrer:2005ECOOP}. 

\section{Conclusion}
\label{sec:conclusion}

We presented FlowCFL, a framework for type-based reachability analysis. 
We presented (1) a novel dynamic semantics, (2) correctness arguments for CFL-reachability 
over graphs with inverse edges, and (3) equivalence between a CFL-reachability analysis 
and a type-based reachability analysis.

\bibliography{library}

\appendix

\section{Applications of {FlowCFL}}
\label{app:applications}

\subsection{Approximate Computing: EnerJ and Rely}

In addition to taint analysis, another application domain of FlowCFL is approximate computing, which has received
significant attention~\cite{Sampson:2011PLDI, Carbin:2013OOPSLA, Bornholt:2014ASPLOS, Holt:2016SOCC}. 
Approximate computing relies on programming language technology such as type systems
and Hoare logic to reason about execution on unreliable hardware~\cite{Sampson:2011PLDI,Carbin:2013OOPSLA},
execution in the presence of probabilistic sensor data~\cite{Bornholt:2014ASPLOS}, and execution
on inconsistent and approximate cloud storage systems~\cite{Holt:2016SOCC}. An overarching 
issue is the separation of non-approximate and approximate parts of the program. 
Currently, all works require large number of manual annotations that explicitly separate the approximate 
variables and operations from the non-approximate ones.

\subsubsection{EnerJ}

\begin{figure}[t]
\begin{tabular}{ll}
  \begin{minipage}{7cm}
  \begin{lstlisting}
 public class IntPair {
   int x;
   int y;
   int numAdditions = 0;
   void addToBoth(IntPair this; int amount) {
      x += amount;
      y += amount;
      numAdditions++;
   }
 }   
\end{lstlisting}
  \end{minipage}

&
  
  \begin{minipage}{6cm}
    \begin{lstlisting}
 public class Example {
   public static void main() {
     IntPair i = new IntPair();
     i.addToBoth(10);
     ...
     IntPair j = new IntPair();
     j.addToBoth(k);
     @Precise z = j.x + j.y; 
  }
}  
    \end{lstlisting}
  \end{minipage}
\end{tabular}
  \caption{{\sf IntPair} from EnerJ~\cite{Sampson:2011PLDI}. Variable \z at line 8 in {\sf main} is 
  precise ({\sf @Precise} maps to \low in FlowCFL), and therefore, flow from approximate 
  data to \z is forbidden. FlowCFL infers that class {\sf IntPair} 
  is polymorphic: \x, \y, \this and {\sf amount} of {\sf addToBoth} are \poly (exactly as annotated in~\cite{Sampson:2011PLDI}).
  FlowCFL infers that {\sf i} in {\sf main} is {\sf @Approx}, while {\sf j} and {\sf k} are {\sf @Precise}; i.e.,
  it instantiates polymorphic {\sf IntPair} as {\sf @Approx} in the context of {\sf i}, and as {\sf @Precise}
  in the context of {\sf j}. Field {\sf numAdditions} in {\sf IntPair} is {\sf @Approx} because it does not flow to \z in either context
  (again, exactly as in~\cite{Sampson:2011PLDI}).
} 
\label{fig:EnerJ}
\end{figure}

EnerJ~\cite{Sampson:2011PLDI} partitions the program variables into {\sf @Approx} and {\sf @Precise} 
where {\sf @Approx} variables can be stored and used in energy-efficient storage. It requires non-interference for correctness:
an {\sf @Approx} variable cannot flow into a {\sf @Precise} one. The EnerJ type system 
can be cast as an instance of FlowCFL in the negative setting. {\sf @Approx} maps to \high, {\sf @Precise} maps
to \low, and {\sf @Context} maps to \poly. 
Programmers can annotate a set of {\sf @Precise} sinks designating values
that must be computed precisely. The system infers types for the rest of the variables maximizing
the approximate part of the program. \figref{fig:EnerJ} illustrates.


\subsubsection{Rely}

\begin{figure}[t]
\begin{center}
\begin{minipage}{11cm}
\begin{lstlisting}
class Newton {
  static float tolerance = 0.00001;
  static int maxsteps = 40;    
  static float F(float x) { ... }
  static float dF(float x) { ... }

  static float newton(urel float xs) { 
    float x, xprim;
    float t1, t2;
    int count = 0;
    
    x = xs;
    xprim = xs + 2*tolerance;
    while ((x - xprim >= tolerance) || (x - xprim <= -tolerance)) {
      xprim = x;
      t1 = F(x);
      t2 = dF(x);
      x = x - t1 / t2;
      if (count++ > maxsteps) break; 
    }
    if (!((x - xprim <= tolerance) && (x - xprim >= -tolerance))) {
      x = INFTY;
    } 
    return x;
  } 
}
\end{lstlisting}
\end{minipage}
\end{center}
  \caption{Newton's method from~\cite{Carbin:2013OOPSLA}. Input {\sf xs} in line 7 is annotated unreliable
  ({\sf urel} corresponds to \high in FlowCFL).
  FlowCFL fills in the remaining annotations. It infers that {\sf F} is {\sf poly float F(poly float x)} and so is {\sf dF}. Variables \x, {\sf xprim}, {\sf t1} and {\sf t2}
  are inferred {\sf urel} (as explicitly annotated in~\cite{Carbin:2013OOPSLA}).
  All operations, except for line 19, are unreliable (as in~\cite{Carbin:2013OOPSLA}).
} 
\label{fig:Rely}
\end{figure}

Another system in this domain, Rely~\cite{Carbin:2013OOPSLA}, reasons about execution on unreliable hardware. Again, 
programmers must explicitly annotate all unreliable variables (using the {\sf urel} annotation), as well as all operations on 
unreliable variables (e.g., unreliable + becomes +.). Unannotated variables and operations are considered reliable.
\cite{Carbin:2013OOPSLA} describe how Rely verifies a bound on the reliability of a computation with respect to the reliability of its input. 
For example, it verifies that the result of the computation in~\figref{fig:Rely} is at least $0.99*R({\sf xs})$, 
where $R({\sf xs})$ is the reliability of input {\sf xs}.  

Rely can be cast as an instance of FlowCFL in the positive setting. The {\sf urel} (unreliable) Rely annotation maps to \high, and the default reliable annotation maps 
to \low. Programmers annotate unreliable 
inputs with \high and FlowCFL infers types for the rest of the program, thus minimizing the unreliable partition. 
Unlike with EnerJ where we maximize the approximate partition and thus, energy savings, 
here we minimize the unreliable partition, which may improve on Rely's bound. (The
smaller the unreliable partition, the more precise the bound on the computation.) \figref{fig:Rely} illustrates
type inference for Rely. We have run all programs from~\cite{Carbin:2013OOPSLA,Carbin:2013TR} through FlowCFL 
and we have inferred the same types as annotated in~\cite{Carbin:2013OOPSLA,Carbin:2013TR}.

\begin{figure}[t]
  \begin{tabular}{lll}
  \begin{minipage}{5cm}
    \begin{lstlisting}
public class Data { 
  int d;
  void set(Data this, int p) { 
     this.d = p; 
  }
  int get(Data this) { 
     return this.d; 
  }   
}
  \end{lstlisting}
  \end{minipage}
  
  & 
  &
  
  \begin{minipage}{8cm}
    \begin{lstlisting}
public class Example {
  public void main() {
    Data ds = new Data(); 
    sensitive int s = ...; // sensitive source 
    ds.set(s); 
    int ss = ds.get();    
    Data dc = new Data(); 
    int c = ...;  
    dc.set(c); 
    int cc = dc.get();
  }
}
    \end{lstlisting}
  \end{minipage}
  \end{tabular}
  \caption{An example from JCrypt~\cite{Dong:2016PPPJ}.
  {\sf s} in line 4 of {\sf main} is a sensitive input (\high in FlowCFL), and all computation affected by {\sf s} must be secure. FlowCFL infers 
  that class {\sf Data} is polymorphic, and that {\sf ds} and {\sf ss} in {\sf main} are sensitive. The remaining 
  variables remain plaintext (\low in FlowCFL).
} 
\label{fig:JCrypt}
\end{figure}

\subsection{Secure Computation}

Yet another application of FlowCFL is secure computation. As clients 
increasingly outsource computation to untrusted cloud servers, there is pressing 
need to preserve confidentiality of data. This can be done through computation outsourcing~\cite{Shan:2018Surveys} or 
Multi-party Computation (MPC)~\cite{EvansMPCBook}. 
Unfortunately, secure computation is expensive. Fully homomorphic encryption~\cite{Cooney:2009,Gentry:2010,Gentry:2011}
is still prohibitively expensive. Partially homomorphic encryption, an essential building block in both computation outsourcing
and MPC, is still costly; for example, homomorphic addition over cypertexts is about 5X more expensive than addition
over plaintexts~\cite{Tetali:2015Thesis}.
Therefore, it is important to minimize portions of the program that require computation under secure computation 
protocols. 

JCrypt~\cite{Dong:2016PPPJ} is a system where programmers annotate a set of inputs as sensitive (i.e., \high{}) and JCrypt propagates 
those sensitive annotations throughout the program. For example, inputs from files in MapReduce applications, or secret-shared inputs 
in MPC will be annotated as sensitive. JCrypt is yet another instance of FlowCFL in the positive setting; it minimizes the
sensitive portion of the program, thus minimizing expensive secure computation. 
\figref{fig:JCrypt} illustrates JCrypt and inference with FlowCFL.

\end{document}